\newcommand{\lfam}{\mathscr{L}}
\newcommand{\pcfa}{\mbox{\textrm{PCFA}}}
\newcommand{\dpcfa}{\mbox{\textrm{DPCFA}}}
\newcommand{\cpcfa}{\mbox{\textrm{CPCFA}}}
\newcommand{\rcpcfa}{\mbox{\textrm{RCPCFA}}}
\newcommand{\drcpcfa}{\mbox{\textrm{DRCPCFA}}}
\newcommand{\fdrcpcfa}[1]{\mbox{$#1$\textrm{-DRCPCFA}}}
\newcommand{\rightend}{\mathord{\vartriangleleft}}  
\newcommand{\oca}{\textrm{OCA}}
\newcommand{\valc}{\textrm{VALC}}
\newcommand{\invalc}{\textrm{INVALC}}
\newcommand{\dollar}{\texttt{\$}}
\newcommand{\kand}{\texttt{\&}}
\theoremstyle{plain} 
 \newtheorem{definition}{Definition}
 \newtheorem{lemma}[definition]{Lemma}
 \newtheorem{theorem}[definition]{Theorem}
 \newtheorem{corollary}[definition]{Corollary}
 \newtheorem{example}[definition]{Example}
\def\squareforqed{$\Box$}
\def\qed{\ifmmode\squareforqed\else{\unskip\nobreak\hfil%
  \penalty50\hskip1em\null\nobreak\hfil\squareforqed%
  \parfillskip=0pt\finalhyphendemerits=0\endgraf}\fi}
\newenvironment{proof}{\noindent{\textbf{Proof}\ }}{\qed\medskip}
\title{Measuring Communication in\\ Parallel Communicating Finite Automata}
\author{%
Henning Bordihn
\institute{Institut f\"ur Informatik, Universit\"at Potsdam,\\
  August-Bebel-Str.~89, 14482 Potsdam, Germany}
  \email{henning@cs.uni-potsdam.de}\\
\\
Martin Kutrib and Andreas Malcher
\institute{Institut f\"ur Informatik, Universit\"at Giessen,\\
  Arndtstr.~2, 35392 Giessen, Germany}
  \email{$\{$kutrib,malcher$\}$@informatik.uni-giessen.de}
}
\begin{document}

\maketitle

\begin{abstract}
Systems of deterministic finite automata communicating by sending their 
states upon request are investigated, when the amount of communication is restricted.
The computational power and decidability properties are studied for the case 
of returning centralized systems, when the number of
necessary communications during the computations of the system
is bounded by a function depending on the length of the input.
It is proved that an infinite hierarchy of language families exists, depending
on the number of messages sent during their most economical recognitions. Moreover,
several properties are shown to be not semi-decidable for the systems under consideration.
\end{abstract}

\section{Introduction}\label{sec:intro}

Communication is one of the most fundamental concepts in computer science:
objects of object-oriented programs, roles or pools in business processes,
concurrent processes in computer networks or in information or operating systems 
are examples of communicating agents.

Parallel communicating finite automata systems (PCFA) have been introduced 
in~\cite{Martinvide:2002:pfascs} as a simple automaton model of parallel processes 
and cooperating systems, 
see also~\cite{bordihn:2011:uhrpcfa,bordihn:2012:ccpcfa,Choudhary:2007:rnrpcnfae}. 
A PCFA consists of several finite automata,
the components of the system, that process a joint input string 
independently of each other. However, their transitions are synchronized according 
to a global clock. The cooperation of the components is enabled by 
communication steps in which components can request the state reached 
by another component.
The system can work in returning or non-returning mode. In the former case
each automaton which sends its current state is set back to its initial state after this
communication step. In the latter case the state of the sending automaton is not
changed. 
Recently, these communication protocols have been refined in~\cite{vollweiler:2013:aspcfa}
and further investigated for the case of parallel
communicating systems of pushdown automata~\cite{otto:2013}. 
There, the communication process 
is performed in an asynchronous manner, reflecting the technical features of
many real communication processes. In the sequel of this paper and as a first step towards
an investigation of the influence of restricted communication to parallel communicating 
systems of automata, we stick with the simpler model having synchronized communication steps.

In a PCFA, one also distinguishes between centralized systems where only one designated 
automaton, called master, can request information from other automata, and
non-centralized systems where every automaton is allowed to request information
from others. Taking the distinction between returning and non-returning systems into account, 
we are led to four different working modes. Moreover, one distinguishes between 
deterministic and nondeterministic PCFA. The system is deterministic, 
if all its components are deterministic finite automata.

It is known from~\cite{bordihn:2012:ccpcfa,Choudhary:2007:rnrpcnfae,Martinvide:2002:pfascs}
that deterministic (nondeterministic) non-centralized PCFA are equally powerful 
as deterministic (nondeterministic) one-way multi-head finite automata~\cite{ibarra:1973:twmha}, 
both in returning and non-returning working modes. Moreover, it is proved in~\cite{bordihn:2012:ccpcfa} 
that nondeterminism is strictly more powerful than determinism 
for all the four working modes, and that deterministic centralized returning 
systems are not weaker than deterministic centralized non-returning ones.

All variants of PCFA accept non-regular languages due to the feature that
communication between the components of the system is allowed. 
Thus it is of interest to measure the amount of communication needed for 
accepting those languages. Mitrana proposed in~\cite{mitrana:2000:odcipcfas} 
a dynamical measure of descriptional complexity as follows:
The degree of communication of a PCFA for a given word is the minimal number of 
communications necessary to recognize the word. Then, the degree of communication 
of a PCFA is the supremum of the degrees of communication taken over all words 
recognized by the system, while the degree of communication of a language 
(with respect to a PCFA of type~$X$) is the infimum of the degrees of communication
taken over all PCFA of type~$X$ that accept the language. 
Mitrana proved that this measure cannot be algorithmically computed for languages 
accepted by nondeterministic centralized or non-centralized non-returning PCFA. 
The computability status of the degree of communication for the other types 
of PCFA languages as well as for all types of PCFA is stated as open question 
in~\cite{mitrana:2000:odcipcfas}. 

In this paper, we study PCFA where the degree of communication is bounded by a function
in the length of the input word. We restrict ourselves to one of the simplest types 
of PCFA, namely to deterministic centralized returning systems of finite automata.
In the next section, the basic definitions and two examples of languages accepted by
communication bounded PCFA are presented. In Section~\ref{sec:cap}, we show that bounding the 
degree of communication by logarithmic, square root or linear functions leads to
three different families of languages. For the strictness results, we use similar 
witness languages and a proof technique based on Kolmogorov complexity as 
in~\cite{kutrib:2010:tpwcc},
where the second and the third author investigated the computational power of 
two-party Watson-Crick systems, that is, synchronous systems consisting of two finite 
automata running in opposite directions on a shared read-only input and
communicating by broadcasting messages. 

In Section~\ref{sec:decidability}, non-semi-decidability results are proved for 
deterministic returning centralized PCFA and their languages, thus partially answering 
questions listed as open in~\cite{mitrana:2000:odcipcfas}. Similarly to~\cite{bordihn:2011:uhrpcfa}
the proofs rely on properties of one-way cellular automata and their valid computations. 
Finally, Section~\ref{sec:hierarchy} refines the three-level hierarchy from Section~\ref{sec:cap}
to an infinite hierarchy.

\section{Preliminaries and Definitions}\label{sec:def}

We write $\Sigma^*$ for the set of all words over the finite 
alphabet $\Sigma$, and $\mathbb{N}$ for the set $\{0,1,2,\dots\}$ 
of non-negative integers.
The \emph{empty word} is denoted by
$\lambda$.
For the \emph{length} of $w$ we
write $|w|$.
We use $\subseteq$ for \emph{inclusions} and $\subset$ for 
\emph{strict inclusions}.

Next we turn to the definition of 
parallel communicating finite automata systems. The nondeterministic
model has been introduced in~\cite{Martinvide:2002:pfascs}.
Following~\cite{bordihn:2011:uhrpcfa}, the formal definition is as follows.

A \emph{deterministic parallel communicating finite automata system of
degree~$k$} $(\dpcfa(k))$ is a construct 
$A=\langle \Sigma, A_1, A_2, \dots, A_k, Q, \rightend\rangle$, 
where 
\begin{enumerate}
\item
$\Sigma$ is the set of \emph{input symbols}, 
\item
each $A_i=\langle S_i, \Sigma,
\delta_i, s_{0,i}, F_i\rangle$, $1\leq i\leq k$, is a 
\emph{deterministic finite automaton} with finite state set~$S_i$, 
\emph{partial} transition function 
\mbox{$\delta_i: S_i \times (\Sigma \cup \{\lambda,\rightend\}) \to S_i$}
(requiring that $\delta_i(s,a)$ is undefined for all $a \in \Sigma \cup \{\rightend\}$, if 
$\delta_i(s,\lambda)$ is defined),
initial state $s_{0,i}\in S_i$, and set of
accepting states $F_i\subseteq S_i$, 
\item
 $Q=\{q_1, q_2, \dots, q_k\}\subseteq \bigcup_{1 \leq i \leq k} S_i$
 is the set of \emph{query states}, and
\item
$\rightend\notin \Sigma$ is the \emph{end-of-input symbol}.
\end{enumerate}

The single automata are called \emph{components} of the
system $A$.
A \emph{configuration}  $(s_1, x_1, s_2, x_2, \dots, s_k, x_k)$ of 
$A$ represents the current states $s_i$ 
as well as the still unread parts $x_i$ of the tape inscription of all 
components $1\leq i\leq k$. System $A$ starts with all of 
its components scanning the first square of the tape in their initial states. 
For input word $w\in \Sigma^*$, the initial configuration is
$(s_{0,1}, w\rightend, s_{0,2}, w\rightend, \dots, s_{0,k}, w\rightend)$.

Basically, a computation of $A$ is a sequence of configurations
beginning with an initial configuration and ending with a halting
configuration, when no successor configuration exists. 
Each step can consist of two phases. 
In a first phase, all components are in non-query states and
perform an ordinary (non-communicating) step independently.
The second phase is the communication phase during which components 
in query states receive the requested states as long as the sender 
is not in a query state itself. 
That is, if a component $A_i$ is in query
state $q_j$, then $A_i$ is set to the current state of component $A_j$.
This process is repeated until all requests 
are resolved, if possible. If the requests are cyclic, no successor configuration exists.
For the first phase, we define the successor configuration relation 
$\vdash$ by 
$
(s_1, a_1y_1, s_2, a_2y_2, \dots, s_k, a_ky_k) \vdash 
(p_1, z_1, p_2, z_2, \dots, p_k, z_k),
$
if $Q\cap \{s_1,s_2,\dots,s_k\} = \emptyset$, 
$a_i\in \Sigma\cup\{\lambda,\rightend\}$, $p_i\in \delta_i(s_i,a_i)$, 
and $z_i=\rightend$ for $a_i =\rightend$ and $z_i=y_i$ otherwise,
$1\leq i \leq k$.
For non-returning communication in the second phase, we set 
$
(s_1, x_1, s_2, x_2, \dots, s_k, x_k) \vdash 
(p_1, x_1, p_2, x_2, \dots, p_k, x_k), 
$
if, for all $1\leq i\leq k$ such that $s_i = q_j$ and $s_j\notin Q$, we
have $p_i=s_j$, and $p_r=s_r$ for all the other $r$, $1\leq r\leq k$.
Alternatively, for returning communication in the second phase, 
we set 
$
(s_1, x_1, s_2, x_2, \dots, s_k, x_k) \vdash 
(p_1, x_1, p_2, x_2, \dots, p_k, x_k), 
$
if, for all $1\leq i\leq k$ such that $s_i = q_j$ and $s_j\notin Q$, we
have $p_i=s_j$, $p_j=s_{0,j}$, and $p_r=s_r$ for all the other~$r$, $1\leq r\leq k$.

A computation \emph{halts} when the successor configuration is 
not defined for the current situation. In particular, this may happen
when cyclic communication requests appear, or when the transition
function of one component is not defined. 
The language $L(A)$ accepted by a $\dpcfa(k)$ $A$
is precisely the set of words~$w$ such that there is some computation beginning with 
$w\rightend$ on the input tape and halting with at least one component
having an undefined transition function and 
being in an accepting state. Let $\vdash^*$ denote the reflexive and
transitive closure of the successor configuration relation~$\vdash$ and define
$L(A)$ as
\begin{multline*}
\{\,w\in \Sigma^* \mid 
(s_{0,1}, w\rightend, s_{0,2}, w\rightend, \dots, s_{0,k}, w\rightend)
\vdash^*
(p_1, a_1y_1, p_2, a_2y_2, \dots, p_k, a_ky_k),\\
\mbox{ such that }p_i\in F_i \mbox{ and }\delta_i(p_i,a_i)\mbox{ as well as }\delta_i(p_i,\lambda) \mbox{ are undefined
for some }1\leq i\leq k\,\}.
\end{multline*}

Whenever the degree is missing in the notation $\dpcfa(k)$, 
we mean systems of arbitrary degree.
The absence or presence of an R in the type of the system denotes
whether it works in \emph{non-returning} communication, that is, the sender 
remains in its current state, or \emph{returning} communication, that is, 
the sender is reset to its initial state. If there 
is just one component, say $A_1$, that is allowed to query for states, that
is, $S_i\cap Q=\emptyset$, for $2\leq i\leq k$, then the system is said to be
\emph{centralized}. In this case, we refer to~$A_1$ as the 
\emph{master component} and add a~C to the notation of the type of the system.
The \emph{family of languages accepted} by devices of type~$X$ with arbitrary degree
(with degree $k$) is denoted by~$\lfam(X)$ ($\lfam(X(k))$).

In the following, we study the impact of communication in 
$\pcfa$. The communication is measured by the \emph{total number 
of queries sent during a computation}. That is, we count the 
number of time steps at which a component enters a query
state and consider the sum of these numbers for all components.
Let $f:\mathbb{N}\to\mathbb{N}$ be a mapping. 
If all $w\in L(A)$ are accepted with computations where 
the total number of queries sent is bounded by $f(|w|)$,
then $A$ is said to be \emph{communication bounded by~$f$.}

We denote the class of devices of type~$X$
(with degree $k$) that are communication bounded by some
function $f$ by $f\textrm{-X}$ ($f\textrm{-X}(k)$).

In order to clarify the notation we give two examples.
Whenever we refer to a time $t$ of a computation of a $\dpcfa$, then the 
configuration reached after exactly $t$ computation steps is considered.

\begin{example}\label{exa:expo}
The language 
$L_{expo}=\{\,\dollar a^{2^{0}}b a^{2^{1}} b \cdots ba^{2^{m}} \kand \mid m \ge 1\,\}$ 
belongs to $\lfam(\fdrcpcfa{f}(2))$ with $f \in O(\log(n))$.
Roughly, the idea of the construction is that the lengths of adjacent $a$-blocks 
(separated by a $b$) are compared. To this end, the master reads
the left block with half speed, that is, moving one symbol to the
right in every other time step, while the non-master component reads the right
block with full speed, that is, moving one symbol to the
right in every time step. If the master reaches a $b$, it queries the
non-master whether it has also reached a $b$. If this is true, the comparison
of the next two $a$-blocks is started. The input is accepted if the
master obtains the symbol $\kand$ from the non-master component and the 
remaining input is in $a^+\kand\rightend$.

Formally, we define $A=\langle \{a,b,\dollar,\kand\}, A_1, A_2, \{q_2\}, \rightend \rangle$
to be a $\drcpcfa(2)$ with master component
$A_1=\langle \{s_{0,1},s_{1,1},s_{2,1},s_{3,1},s_{4,1},s_{5,1},s_b,s_{\kand},q_2,accept\}, \{a,b,\dollar,\kand\},
\delta_1, s_{0,1}, \{accept\}\rangle$, second component
$A_2=\langle \{s_{0,2},s_{1,2},s_{2,2},s_{3,2},s_b,s_{\kand},s_{\rightend}\}, \{a,b,\dollar,\kand\},
\delta_2, s_{0,2}, \emptyset\rangle$, and 
transition functions $\delta_1$ and $\delta_2$ as follows.\\[3mm]
\noindent
\underline{The non-master component $A_2$:}
\begin{multicols}{3}
\begin{enumerate}
\item $\delta_2(s_{0,2},\dollar) = s_{1,2}$
\item $\delta_2(s_{1,2},a) = s_{2,2}$
\item $\delta_2(s_{2,2},b) = s_{3,2}$
\item $\delta_2(s_{3,2},a) = s_{3,2}$
\item $\delta_2(s_{3,2},b) = s_b$
\item $\delta_2(s_{3,2},\kand) = s_{\kand}$
\item $\delta_2(s_{0,2},a) = s_{3,2}$
\item $\delta_2(s_{0,2},\rightend) = s_{\rightend}$
\item $\delta_2(s_{\rightend},\lambda) = s_{\rightend}$
\end{enumerate}
\end{multicols}

\noindent
The component reads the input prefix $\dollar ab$ in the first three time steps
(rules 1,2,3). Subsequently, it reads an $a$-block in state $s_{3,2}$ (rule 4).
Whenever it moves on a symbol $b$ it changes into state $s_b$ (rule 5).
So, it enters state $s_b$ at time step 3 plus the length of the second $a$-block plus 1.
The component halts in state~$s_b$ unless it is reset to its initial state
by a query. In this case it reads the current $a$-block and the next $b$ 
and enters state $s_b$ again after a number of time steps that is the length
of the $a$-block plus one (rules 7,4,5). Rule 6 is used when $\kand$ appears in the input
instead of $b$. After being reset into the initial state on the endmarker,
the component enters state $s_{\rightend}$ and loops with $\lambda$-moves.\\[3mm]
\noindent
\underline{The master component $A_1$:}
\begin{multicols}{3}
\begin{enumerate}
\item $\delta_1(s_{0,1},\dollar) = s_{1,1}$
\item $\delta_1(s_{1,1},\lambda) = s_{2,1}$
\item $\delta_1(s_{2,1},\lambda) = s_{3,1}$
\item $\delta_1(s_{3,1},a) = s_{4,1}$
\item $\delta_1(s_{4,1},\lambda) = s_{3,1}$
\item $\delta_1(s_{3,1},b) = q_2$
\item $\delta_1(s_b,a) = s_{4,1}$
\item $\delta_1(s_{\kand},a) = s_{\kand}$
\item $\delta_1(s_{\kand},\kand) = s_{5,1}$
\item $\delta_1(s_{5,1},\rightend) = accept$
\end{enumerate}
\end{multicols}

\noindent
The master reads the input prefix $\dollar ab$ in the first six time steps
and enters the query state $q_2$ (rules 1--6). Exactly at that time the non-master 
component enters state $s_b$. Being in state $s_b$ received the master reads the current 
$a$-block and the next $b$ and enters state $q_2$ again after a number of time 
steps that is two times the length of the $a$-block plus one (rules 7,4,5,6).
Exactly at this time the non-master component enters state $s_b$ again provided
that the $a$-block read by the non-master component is twice as long as the $a$-block
read by the master.
When the master receives state $s_{\kand}$ instead of $s_b$, it reads the
remaining suffix \mbox{(rules 8,9),} enters the accepting state on the endmarker
(rule 10) and halts.

Finally, the length of a word $w\in L_{expo}$ is
$|w|=m+2+\sum_{i=0}^{m}2^i=2^{m+1}+m+1$,
for some $m\geq 1$. In its accepting computation, a communication
takes place for every symbol $b$ and the endmarker. So
there are $m+1$ communications which is of order~$O(\log(|w|))$.
\qed
\end{example}

The construction of the next example is similar to the one given in Example~\ref{exa:expo}.

\begin{example}\label{exa:poly}
The language
$L_{poly}=\{\,\dollar ab a^{3} b a^{5} b\cdots b a^{2m+1} \kand \mid \mbox{$m \ge 0\,$}\,\}$ 
belongs to $\lfam(\fdrcpcfa{f}(2))$  with $f \in O(\sqrt{n})$.
\qed
\end{example}

\section{Computational Capacity}\label{sec:cap}

In this section we consider aspects of the computational capacity of
$\fdrcpcfa{f}(k)$. Examples~\ref{exa:expo} and~\ref{exa:poly} already
revealed that there are non-semilinear languages accepted by systems
with two components and sublinear communication. The next simple result
is nevertheless important for the size of representations that will be
used in connection with Kolmogorov arguments to separate language classes.

\begin{lemma}\label{lem:linear:time}
Let $k\geq 1$ and $A$ be a $\drcpcfa(k)$ with $S_1, S_2, \ldots, S_k$ being 
the state sets of the single components. 
If $w \in L(A)$, then $w$ is
accepted after at most $|S_1|\cdot|S_2|\cdots|S_k|\cdot(|w|+1)$ time steps, that is,
in linear time.
\end{lemma}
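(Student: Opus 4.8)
The plan is to argue by a configuration-counting (pigeonhole) argument on the length of an accepting computation, observing that the "relevant" part of a configuration — the one that can still influence the future of the computation — comes from a finite set whose size is bounded by $|S_1|\cdot|S_2|\cdots|S_k|\cdot(|w|+1)$.

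First I would make precise which components of a configuration $(s_1,x_1,\dots,s_k,x_k)$ matter. The state tuple $(s_1,\dots,s_k)$ ranges over a set of size $|S_1|\cdots|S_k|$. For the tape contents, note that each $x_i$ is always a suffix of $w\rightend$, so it is completely determined by the number of symbols component $A_i$ has already consumed, an integer in $\{0,1,\dots,|w|+1\}$. Hence a configuration is fully described by the state tuple together with the vector of head positions $(h_1,\dots,h_k)$ with $0\le h_i\le |w|+1$. Now comes the key point that keeps the head-position factor linear rather than of order $(|w|+1)^k$: in a $\drcpcfa$ the successor relation is \emph{deterministic}, so a computation is eventually \emph{ultimately periodic} in the sense that once a configuration repeats, the computation cycles forever and never halts; therefore in any \emph{accepting} (hence halting) computation no configuration repeats. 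So it suffices to bound the number of distinct configurations that can occur. Moreover — and this is what removes the exponential blow-up in the head positions — in each non-communicating step every component either advances its head by one (reading a symbol from $\Sigma\cup\{\rightend\}$) or stays put (a $\lambda$-move), and a communication step never moves any head; so the sum $h_1+\dots+h_k$ is non-decreasing along the computation. Between two consecutive steps at which this sum strictly increases, the head-position vector is constant, so the only thing that can change is the state tuple, which can take at most $|S_1|\cdots|S_k|$ values before repeating (and a repeat means the computation loops, contradicting halting). Since $h_1+\dots+h_k$ starts at $0$ and is at most $k(|w|+1)$ — and actually we only need that it can strictly increase at most $|w|+1$ times if we track it more carefully, but even the crude bound together with the per-plateau bound suffices — we get that an accepting computation has length at most $|S_1|\cdots|S_k|\cdot(|w|+1)$ after refining the plateau count appropriately.

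More carefully, to land exactly the stated bound I would track a single head, say the master's position $h_1$ (or argue via the total), and note that on the plateau where the maximal head position is fixed the configuration is determined by the state tuple and the other head positions, all drawn from a set of size $\le |S_1|\cdots|S_k|$, so no plateau can last more than $|S_1|\cdots|S_k|$ steps without forcing a repeated configuration and hence a non-halting loop. Since the maximal head position takes at most $|w|+2$ distinct values but the computation must consume the endmarker and then halt, the number of plateaus is at most $|w|+1$, giving a total length of at most $|S_1|\cdots|S_k|\cdot(|w|+1)$ steps, which is the claim; linear time in $|w|$ follows since $|S_1|\cdots|S_k|$ is a constant depending only on $A$.

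The main obstacle is getting the bookkeeping right so that the product of the head-position ranges does not enter the bound, i.e., justifying that one only pays an \emph{additive} $(|w|+1)$ for head movement rather than a multiplicative $(|w|+1)^k$; this is exactly where determinism (no repeated configuration in a halting run) and the monotonicity of head positions are used together. The rest — that tape contents are determined by head positions, that communication steps leave heads fixed, that a $\lambda$-move blocks reading — is routine and follows directly from the definitions in Section~\ref{sec:def}.
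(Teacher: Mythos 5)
Your core argument is essentially the paper's: determinism means an accepting (hence halting) computation never repeats a configuration, the unread tape parts are determined by the head positions, heads never move backwards, and so during any period in which no head advances the state tuple, drawn from a set of size $|S_1|\cdots|S_k|$, cannot repeat; since each head advances at most $|w|+1$ times, acceptance is in linear time. This is exactly the pigeonhole observation the paper's two-line proof makes, and it suffices for every later use of the lemma (only linearity of the time bound is ever needed).

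One caveat concerns your ``more careful'' refinement intended to land the exact constant $|S_1|\cdots|S_k|\cdot(|w|+1)$: it does not work as stated. If a plateau is defined by fixing only the \emph{maximal} head position, the other heads may still move inside that plateau, and their positions range over up to $(|w|+2)^{k-1}$ values; they are certainly not ``drawn from a set of size $\le |S_1|\cdots|S_k|$'' as you assert, so the claimed bound of $|S_1|\cdots|S_k|$ steps per such plateau fails. Similarly, the sum $h_1+\cdots+h_k$ can strictly increase up to $k(|w|+1)$ times, not $|w|+1$ times. What your correct (crude) version actually yields is a bound of order $|S_1|\cdots|S_k|\cdot k\cdot(|w|+1)$: linear time, but with an extra factor $k$ compared with the constant in the statement. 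This is a minor blemish rather than a substantive gap, since the paper's own proof is no more precise (it only argues that some component must move within every $|S_1|\cdots|S_k|$ consecutive steps) and the exact constant plays no role elsewhere; still, you should either drop the claim to the exact constant or honestly report the factor $k$.
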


\begin{proof}
During a computation some component $A_i$ may be in $|S_i|$ different states.
So after $|S_1| \cdot |S_2| \cdots |S_k|$ time steps 
the whole system runs through a loop if none of the components moves.
Therefore, as long as no halting configuration is
reached, at least one component must move 
after at most $|S_1|\cdot|S_2|\cdots|S_k|$ time steps.
\end{proof}

The language of the next lemma combines the well-known non-context-free copy
language with $L_{expo}$ from above. It plays a crucial role in later proofs.

\begin{lemma}\label{lem:expoww}
The language 
\[
L_{expo,wbw}=\{\, \dollar w_1 w_2 \cdots w_m b a^{2^{0}}w_1w_1 a^{2^{1}} w_2w_2 \cdots a^{2^{m-1}} w_mw_m  \kand
\mid m \ge 1, w_i \in \{0,1\}, 1 \le i \le m\,\}
\]
belongs to $\lfam(\fdrcpcfa{O(\log(n))}(3))$.
\end{lemma}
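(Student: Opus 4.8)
The plan is to build a $\drcpcfa(3)$ with one master and two non-master components, using the logarithmic-communication comparison technique from Example~\ref{exa:expo} but splitting the two jobs that must be done concurrently. On input $\dollar w_1\cdots w_m\, b\, a^{2^{0}}w_1w_1\,a^{2^{1}}w_2w_2\cdots a^{2^{m-1}}w_mw_m\kand$, two independent checks are required: first, that the $a$-blocks after the $b$ have lengths $2^{0},2^{1},\dots,2^{m-1}$ (the $L_{expo}$ part), and second, that the $i$-th bit $w_i$ of the prefix equals both copies of the bit pair following the $i$-th $a$-block (the copy-language part). The master $A_1$ walks the prefix $w_1\cdots w_m$ at some controlled speed and, via queries, coordinates with the non-master components that walk the suffix; each communication is triggered by a separator ($b$ or the bit-pair boundaries), of which there are $O(m)=O(\log(|w|))$ many, so the communication bound is respected.

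First I would handle the exponential-length check exactly as in Example~\ref{exa:expo}: component $A_2$ reads the suffix's $a$-blocks at full speed while the master reads ``markers'' at half speed, so that $A_2$ reaches the end of the $a$-block $a^{2^{i}}$ precisely when the master has advanced over $2^{i-1}$ cells; the master queries $A_2$ at each block boundary and checks synchronization, which forces $|a^{i\text{-th block}}| = 2\cdot|a^{(i-1)\text{-th block}}|$ and hence the lengths are $2^{0},2^{1},\dots$. The new ingredient is the bit comparison. Here I would use the third component $A_3$ to scan the prefix bits $w_1,w_2,\dots$ one at a time, advancing one bit each time a block comparison round completes, while the master (or $A_2$) is positioned to read the bit pair $w_iw_i$ that follows the $a$-block $a^{2^{i-1}}$ in the suffix. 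At the block boundary the master is in a state that remembers the single prefix bit just delivered by $A_3$ (obtained by a query), then reads the two suffix bits and verifies both equal that remembered bit; the transition function is simply left undefined on a mismatch, so the computation blocks and the word is rejected. After the pair is consumed, the master re-enters the ``half-speed marker'' phase for the next $a$-block, $A_3$ advances to $w_{i+1}$, and the cycle repeats; the final $\kand$ is detected and, after checking that the prefix pointer in $A_3$ has also reached the $b$ (so that the number of prefix bits equals the number of suffix blocks, i.e.\ $m$), the master enters its accepting state on $\rightend$.

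The synchronization bookkeeping is the main obstacle: three components with different reading speeds must meet at the right configurations, and one must be careful that (i) $A_3$'s pointer into the prefix stays exactly one bit ahead of what the master needs, across the variable-length gap caused by the growing $a$-blocks, and (ii) the ``half-speed'' counting in the master is reconciled with the fact that between two consecutive queries the master must also read two suffix bits, not just $a$'s. I would arrange the timing so that $A_3$ simply waits (loops with $\lambda$-moves or stays put via undefined non-$\lambda$ transitions blocked behind the master) until queried, which decouples it from the length-counting entirely — $A_3$ only ever advances on command, so its pointer is correct by construction. Likewise the master's state-counting for the $a$-blocks is done with a fresh counting phase after each bit-pair is read, so the two subtasks do not interfere. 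With these conventions the whole system is deterministic, centralized, returning, has three components, and performs exactly one query per $a$-block boundary plus a constant number at the ends: since there are $m$ blocks and $|w| \ge 2^{m}$, this is $O(\log(|w|))$ communications, giving $L_{expo,wbw}\in\lfam(\fdrcpcfa{O(\log(n))}(3))$. A final check is that on every word \emph{not} in $L_{expo,wbw}$ — wrong block lengths, mismatched bits, wrong count of blocks versus prefix bits, or malformed layout — some transition becomes undefined before an accepting state is reached, which follows from the undefined-on-mismatch design together with the length-synchronization argument inherited from Example~\ref{exa:expo}.
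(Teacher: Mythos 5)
Your construction is essentially the paper's: $A_2$ performs the Example~\ref{exa:expo}-style half-speed/full-speed timing comparison along the suffix, $A_3$ serves as a prefix-bit dispenser that advances one symbol each time it is queried and otherwise loops with $\lambda$-moves, and the master coordinates with two queries per block boundary (not one, as you state, but still $2m+O(1)\in O(\log n)$); the only cosmetic difference is that the paper also lets $A_2$ verify that the two adjacent suffix bits are equal, while the master compares against the prefix bit received from $A_3$. One small caution: your parenthetical alternative of keeping $A_3$ in place via undefined non-$\lambda$ transitions ``blocked behind the master'' does not work, since each component reads its own copy of the input and an undefined transition in any component halts the whole system; the $\lambda$-loop you mention first (and which the paper uses) is the right mechanism.
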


\begin{proof}
A formal construction of a $\fdrcpcfa{O(\log(n))}(3)$ accepting $L_{expo,wbw}$
is given through the transition functions below, where 
$s_{0,i}$ is the initial state of component $A_i$, $1\leq i\leq 3$, 
the sole accepting state is $accept$, and $\sigma \in \{0,1\}$.

The second non-master component $A_3$
initially passes over the $\dollar$ and, then, it reads a symbol, remembers it
in its state, and loops without moving (rules 1,2,3,8,9). Whenever the 
component is reset into its initial state after a query, it reads the next 
symbol, remembers it, and loops without
moving (rules 4--11). This component is used by the master 
to match the $w_i$ from the prefix with the $w_i$ from the suffix.\\[3mm]
\noindent
\underline{The non-master component $A_3$:}
\begin{multicols}{3}
\begin{enumerate}
\item $\delta_3(s_{0,3},\dollar) = s_{1,3}$
\item $\delta_3(s_{1,3},0) = s_{0}$
\item $\delta_3(s_{1,3},1) = s_{1}$
\item $\delta_3(s_{0,3},0) = s_{0}$
\item $\delta_3(s_{0,3},1) = s_{1}$
\item $\delta_3(s_{0,3},b) = s_{b}$
\item $\delta_3(s_{0,3},a) = s_{a}$
\item $\delta_3(s_{0},\lambda) = s_{0}$
\item $\delta_3(s_{1},\lambda) = s_{1}$
\item $\delta_3(s_{b},\lambda) = s_{b}$
\item $\delta_3(s_{a},\lambda) = s_{a}$
\item[]
\end{enumerate}
\end{multicols}

The first non-master component $A_2$ initially passes over the prefix 
$\dollar w_1 w_2 \cdots w_m$ (rules 1,2), the
$b$ (rule 3), and the adjacent infix $aw_1w_1aaw_2w_2$ (rules 4--13). 
On its way it checks whether the neighboring symbols $w_i$ are in fact the same 
(rules 5--8 and 10--13). If the second check is successful the
component enters state $s_{ww}$. Exactly at that time it has to be queried by
the master, otherwise it blocks the computation. Subsequently, it repeatedly 
continues to read the input, where each
occurrence of neighboring symbols $w_i$ are checked for equality (rules 14 and
9--13), which is indicated by entering state $s_{ww}$ again.
This component is used to verify that all neighboring symbols $w_i$ in the
suffix are equal and, by the master, to check the lengths of the $a$-blocks in
the same way as in Example~\ref{exa:expo}. Note that the component
is at time $m+9$ on the first symbol after $w_2w_2$. After being
reset to its initial state, it takes a number of time steps equal to
the length of the next $a$-block plus 2 to get on the first symbol 
after the next $w_iw_i$.\\[3mm]
\noindent
\underline{The non-master component $A_2$:}
\begin{multicols}{3}
\begin{enumerate}
\item $\delta_2(s_{0,2},\dollar) = s_{1,2}$
\item $\delta_2(s_{1,2},\sigma) = s_{1,2}$
\item $\delta_2(s_{1,2},b) = s_{2,2}$
\item $\delta_2(s_{2,2},a) = s_{3,2}$
\item $\delta_2(s_{3,2},0) = s^0_{4,2}$
\item $\delta_2(s_{3,2},1) = s^1_{4,2}$
\item $\delta_2(s^0_{4,2},0) = s_{5,2}$
\item $\delta_2(s^1_{4,2},1) = s_{5,2}$
\item $\delta_2(s_{5,2},a) = s_{5,2}$
\item $\delta_2(s_{5,2},0) = s^0_{6,2}$
\item $\delta_2(s_{5,2},1) = s^1_{6,2}$
\item $\delta_2(s^0_{6,2},0) = s_{ww}$
\item $\delta_2(s^1_{6,2},1) = s_{ww}$
\item $\delta_2(s_{0,2},a) =  s_{5,2}$
\item $\delta_2(s_{0,2},\kand) = s_{\kand}$
\item $\delta_2(s_{\kand},\lambda) = s_{\kand}$
\item $\delta_2(s_{0,2},\rightend) = s_{\rightend}$
\item $\delta_2(s_{\rightend},\lambda) = s_{\rightend}$
\end{enumerate}
\end{multicols}

The master component $A_1$ initially passes over the prefix 
$\dollar w_1 w_2 \cdots w_m$ (rules 1,2), the
$b$ (rule 3), and the first $a$ (rules 4--8). Then it reads the first of
two adjacent symbols $w_i$ and enters the query state $q_3$ (rule 9) (the equality 
of the symbols $w_i$ has already been checked by component $A_2$).
From component $A_3$ it receives the information about the matching symbol
$w_i$ from the prefix. If this symbol is the same as the next input symbol, then
the computation continues (rules 10,11) by entering query state $q_2$.
Note that this happens exactly at time step $m+9$. If the master receives
state $s_{ww}$ the length of the first two $a$-blocks are verified.
Now the master repeatedly continues to read the input (rule 12,7,8),
where on each occurrence of neighboring symbols $w_i$ the equality with
the corresponding symbol in the prefix is checked by querying component~$A_3$
and the lengths of the $a$-blocks are compared by querying component $A_2$.
After querying component $A_2$, it takes a number of time steps equal to
the length of the adjacent $a$-block (processed by component~$A_2$) plus 2 to get into state $q_2$ again.
Finally, when the master component has checked the last symbol $w_m$ and gets the information that
$A_2$ has read symbol $\kand$, it queries component $A_3$ (rule 13). If it 
receives a $b$, the input is accepted (rule 14). In all other cases it is rejected.\\[3mm]
\noindent
\underline{The master component $A_1$:}
\begin{multicols}{3}
\begin{enumerate}
\item $\delta_1(s_{0,1},\dollar) = s_{1,1}$
\item $\delta_1(s_{1,1},\sigma) = s_{1,1}$
\item $\delta_1(s_{1,1},b) = s_{2,1}$
\item $\delta_1(s_{2,1},\lambda) = s_{3,1}$
\item $\delta_1(s_{3,1},\lambda) = s_{4,1}$
\item $\delta_1(s_{4,1},\lambda) = s_{5,1}$
\item $\delta_1(s_{5,1},a) = s_{6,1}$
\item $\delta_1(s_{6,1},\lambda) = s_{5,1}$
\item $\delta_1(s_{5,1},\sigma) = q_{3}$
\item $\delta_1(s_0,0) = q_2$
\item $\delta_1(s_1,1) = q_2$
\item $\delta_1(s_{ww},a) = s_{6,1}$
\item $\delta_1(s_{\kand},\kand) = q_{3}$
\item $\delta_1(s_{b},\rightend) = accept$
\item[]
\end{enumerate}
\end{multicols}

The length of a word $w\in L_{expo,wbw}$ is
$|w|=3m+3+\sum_{i=0}^{m-1}2^i=2^{m}+3m+2$,
for some $m\geq 1$. In its accepting computation, two communications
take place for every $w_iw_i$ and one more communication on the endmarker. So
there are $2m+1$ communications which is of order~$O(\log(|w|))$.
\end{proof}

For the proof of the following theorem we use an incompressibility
argument. General information on Kolmogorov complexity and the
incompressibility method can be found in~\cite{li:1993:itkca:book}.
Let $w\in \{0,1\}^+$ be an arbitrary binary string. The Kolmogorov 
complexity $C(w)$ of $w$ is defined to be the minimal size of a 
program describing~$w$. The following key argument for the incompressibility
method is well known. There are binary strings~$w$ 
of \emph{any} length so that $|w| \le C(w)$. 

\begin{lemma}\label{lem:wbw}
The language 
$L_{wbw}=\{\,w_1 w_2 \cdots w_m b w_1 w_2 \cdots w_m
\mid m \ge 1, w_i \in \{0,1\}, 1 \le i \le m\,\}$ 
is accepted by some $\fdrcpcfa{O(n)}(2)$ but, for any $k\geq 1$,
does not belong to $\lfam(\fdrcpcfa{f}(k))$ 
if $f \in \frac{n}{\omega(\log(n))}$.
\end{lemma}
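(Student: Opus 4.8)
I would split the statement into its upper-bound and lower-bound halves.

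For $L_{wbw}\in\lfam(\fdrcpcfa{O(n)}(2))$ I would adapt the construction of Example~\ref{exa:expo}. The non-master component $A_2$ processes the left copy one symbol at a time: after reading a symbol it stores it in its state and loops with $\lambda$-moves, advancing to the next symbol only when a query resets it to its initial state. The master first runs over the prefix $w_1 \cdots w_m b$; then, before reading the $i$-th symbol of the right copy, it queries $A_2$, receives the stored symbol $w_i$, reads the current input symbol, and has a defined transition only if the two agree. A final query on the endmarker checks that $A_2$ has exactly exhausted the left copy (so that it is then sitting on the central $b$), which forces the two copies to be of equal length; the input is then accepted. Since $|wbw|=2m+1$, this accepting computation uses $m+1\in O(n)$ communications. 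The only delicate point is the mutual timing of the two components, which is routine.

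For the lower bound, suppose some $\drcpcfa(k)$ $A$ accepts $L_{wbw}$ and is communication bounded by $f$ with $f\in\frac{n}{\omega(\log(n))}$. For each $m$ pick $w\in\{0,1\}^m$ with $C(w)\ge m$, and set $n=2m+1$. Consider the accepting computation of $A$ on $wbw\rightend$: by Lemma~\ref{lem:linear:time} it has length $T\in O(n)$, and by hypothesis it visits at most $c:=f(n)$ communication steps, which --- the system being centralised --- are exactly the steps at which the master is in a query state. I would attach to this computation a \emph{signature} consisting of (i)~the number $m$; (ii)~for each of the at most $c$ communication steps, its step number, the complete system configuration at that moment (all component states and all head positions) and the index of the queried component; (iii)~for each component, the step number, head position and state at which it reads the central $b$, if that ever happens; and (iv)~the step number and the configuration at which the computation halts. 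Every item takes $O(\log n)$ bits, so the signature has size $O((c+1)\log n)$.

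The heart of the argument is that the signature determines $w$ among all words of length $m$. Indeed, if $w\ne w'$ had the same signature, run $A$ on $w'bw\rightend\notin L_{wbw}$: by a joint induction over the communication steps one shows that, on this input, every component behaves up to the moment it reads the central $b$ exactly as on $w'bw'$, and from that moment on exactly as on $wbw$ --- before the $b$ a component reads only a prefix of the left copy, which is $w'$ in both $w'bw$ and $w'bw'$; after the $b$ it reads only a suffix of the right copy, which is $w$ in both $w'bw$ and $wbw$; the step and state of the crossing are fixed by part~(iii) of the signature; and, the two signatures being equal, the query steps, the queried components and the received states agree, so that resets happen at the same times and the master's state and head position coincide at every communication step. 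Hence $A$ on $w'bw$ reaches the accepting halting configuration recorded in part~(iv), contradicting $w'bw\notin L_{wbw}$. Therefore the signature map is injective on $\{0,1\}^m$, so $w$ is recoverable from its signature together with $A$ and $m$ (enumerate candidates and simulate), giving $C(w)\le O((c+1)\log n)+O(1)$. With $C(w)\ge m=\Theta(n)$ this forces $f(n)\ge c\in\Omega(n/\log n)$, which contradicts $f\in\frac{n}{\omega(\log(n))}$ once $m$ is large enough.

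The step I expect to be the main obstacle is the inductive ``tracking'' in the previous paragraph: one has to check that a component straddling the $b$ during an interval that contains no communication still ends that interval in the state prescribed by the signature, and that the master's querying behaviour is not perturbed before all the relevant resets have been accounted for --- this is exactly what the crossing data of part~(iii) are for, and part~(iv) is needed to also cover computations in which the master (or whichever component ends up accepting) never enters the right copy. Getting this bookkeeping right, and confirming that the signature stays within $O((c+1)\log n)$ bits, is the bulk of the proof; the incompressibility counting is then immediate.
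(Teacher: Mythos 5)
Your proposal is correct and follows essentially the same route as the paper: the same two-component construction for the $O(n)$ upper bound, and for the lower bound the same Kolmogorov-complexity argument with snapshots taken at every communication step and at each component's crossing of the central $b$, plus the halting configuration, followed by the hybrid-word injectivity argument (you use $w'bw$ where the paper uses $wbw'$, which is immaterial). The bookkeeping you flag as the main obstacle is exactly what the paper's extra $b$-crossing snapshots handle, so nothing is missing.
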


\begin{proof}
First, we sketch the construction of a $\fdrcpcfa{O(n)}(2)$ accepting $L_{wbw}$.
Initially, the master component proceeds to the center marker $b$,
while the non-master component reads the first input symbol~$w_1$ and 
remembers this information in its state. Next, the master queries
the non-master and matches the information received with the first symbol following $b$,
while the non-master reads the next input symbol and remembers it in its state.
Subsequently, this behavior is iterated, that is, the master queries the
non-master again and matches its next input symbol, while the non-master 
reads and remembers the next symbol. The input is accepted when the master 
receives a $b$ at the moment it reaches the right endmarker. Clearly, 
the number of communications on input length $n=2m+1$ is $m+1 \in O(n)$.

Second, we turn to show that $L_{wbw}\notin\lfam(\fdrcpcfa{f}(k))$ 
if $f \in \frac{n}{\omega(\log(n))}$. In contrast to the assertion,
we assume that $L_{wbw}$ is accepted by some
$\fdrcpcfa{f}(k)$ $A=\langle \Sigma, A_1, A_2, \ldots, A_k, Q, \rightend\rangle$ 
with \mbox{$f(n) \in \frac{n}{\omega(\log(n))}$.}
Let $z=wbw$, for some $w \in \{0,1\}^+$, and 
$K_0 \vdash \cdots \vdash K_{acc}$
be the accepting computation on input $z$, where $K_0$ is the
initial configuration and $K_{acc}$ is an accepting configuration.

Next, we consider snapshots of configurations at every time step
at which the master component queries some other component
or at which a component enters the middle marker $b$.
For every such configuration, we take the
time step~$t_i$, the current states $s_1^{(i)}, s_2^{(i)}, \ldots, s_k^{(i)}$,
and the positions $p_1^{(i)}, p_2^{(i)}, \ldots, p_k^{(i)}$ of the components.
Thus, the $i$th snapshot is represented by the tuple
$(t_i,s_1^{(i)},p_1^{(i)},s_2^{(i)},p_2^{(i)},\ldots,s_k^{(i)},p_k^{(i)})$.
Since there are altogether at most \mbox{$f(2|w|+1)$} communications, 
the list of snapshots $\Lambda$ contains at most $f(2|w|+1)+k$ entries. 

We claim that each snapshot can be represented by at most $O(\log(|w|))$ bits.
Due to Lemma~\ref{lem:linear:time} acceptance is in linear time and,
therefore, each time step can be represented by at most $O(\log(|w|))$ bits.
Each position of a component can also be represented by at most $O(\log(|w|))$ bits. Finally, 
each state can be represented by a constant number of
bits. Altogether, each snapshot can be represented by $O(\log(|w|))$ bits. 
So, the list $\Lambda$ can be represented by
$(f(2|w|+1)+k) \cdot O(\log(|w|))= \frac{|w|}{\omega(\log(|w|))}\cdot 
O(\log(|w|))=o(|w|)$ bits.

Now we show that the list $\Lambda$ of snapshots together with 
a snapshot of~$K_{acc}$ and
the knowledge of $A$ and $|w|$ is sufficient to reconstruct $w$.
The reconstruction is implemented by the following algorithm $P$.
First, $P$ sequentially simulates $A$ on all $2^{|w|}$ inputs $xbx$ 
where $|x|=|w|$.
Additionally, it is checked whether the computation simulated has the
same snapshots as in the list $\Lambda$ and the accepting configuration. 
In this way, the string $w$ can be identified. 
We have to show that there is no other
string $w' \neq w$ which can be identified in this way as well. 
Let us assume that such a $w'$ exists. Then all snapshots of 
accepting computations on input $wbw$ and $w'bw'$ are identical. This means
that both computations end at the same time step and all
components are in the same state and position. Additionally, in both
computations communications take place at the same time steps, all
components are in the same state and position at that moment.
Moreover, the right half of the respective words is entered in the same states
and in the same time steps on both input words $wbw$ and $wbw'$.
So, both computations are also accepting on input~$wbw'$
which is a contradiction.

Thus, $w$ can be reconstructed given the above program $P$, the list of
snapshots~$\Lambda$, the snapshot of the accepting configuration, 
$A$, and $|w|$. Since the sizes of $P$ and $A$
are bounded by a constant, the size of $\Lambda$ is bounded 
by~$o(|w|)$, and~$|w|$ as well as the size of the remaining snapshot
is bounded by $O(\log(|w|))$ each, we can reconstruct $w$ from a 
description of total size $o(|w|)$.
Hence, the Kolmogorov complexity $C(w)$, that is, the minimal size of a program
describing $w$ is bounded by the size of the above description, and
we obtain $C(w) \in o(|w|)$.
On the other hand, we know that there are binary strings $w$ of arbitrary length
such that $C(w) \ge |w|$. This is a contradiction for $w$ being long enough.
\end{proof}

The language of the next lemma 
is used in later proofs.

\begin{lemma}\label{lem:polyww}
The language 
\[
L_{poly,wbw}=\{\, \dollar w_1 w_2 \cdots w_m b a^{1} w_1w_1 a^{3} w_2w_2 a^{5} w_3w_3 \cdots a^{2m-1} w_mw_m  \kand
\mid m \ge 1, w_i \in \{0,1\}, 1 \le i \le m\,\}
\]
is accepted by some $\fdrcpcfa{O(\sqrt{n})}(3)$ but, for any $k\geq 1$,
does not belong to 
$\lfam(\fdrcpcfa{f}(k))$ if $f \in O(\log(n))$.
\end{lemma}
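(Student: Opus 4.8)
The plan is to prove the two halves of the statement separately. For the inclusion $L_{poly,wbw}\in\lfam(\fdrcpcfa{O(\sqrt{n})}(3))$ I would build a $\drcpcfa(3)$ that merges the copy-matching construction used for $L_{expo,wbw}$ in Lemma~\ref{lem:expoww} with the block-comparison construction used for $L_{poly}$ in Example~\ref{exa:poly}. Exactly as in Lemma~\ref{lem:expoww}, the master $A_1$ sweeps over the prefix $\dollar w_1\cdots w_m$ and then over the suffix; component $A_3$ serves to hand the master the prefix symbols $w_i$ one at a time, so that each can be matched against the corresponding symbol in the block $w_iw_i$ of the suffix; and component $A_2$ both checks the internal equality of each block $w_iw_i$ and, reusing the mechanism of Example~\ref{exa:poly} that tests whether two consecutive $a$-blocks differ in length by exactly two, confirms that the $a$-blocks have lengths $1,3,5,\dots$. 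An accepting computation then uses two communications per block ($A_1$ queries $A_3$ once and $A_2$ once) plus one final communication at the endmarker, that is, $2m+1$ communications altogether, while the input has length $m^2+3m+3$, so that $m=\Theta(\sqrt{n})$ and the number of communications is in $O(\sqrt{n})$.

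For the non-membership part I would replay the incompressibility argument of Lemma~\ref{lem:wbw}, the decisive difference being that the binary string encoded in an input word of length $n$ now has length only $m=\Theta(\sqrt{n})$ instead of $\Theta(n)$. Assume, for a contradiction, that some $\fdrcpcfa{f}(k)$ $A$ with $f\in O(\log(n))$ accepts $L_{poly,wbw}$. Fix a binary string $w=w_1\cdots w_m$ with $C(w)\geq m$, let $z$ be the word of $L_{poly,wbw}$ that it determines, and put $n=|z|=m^2+3m+3$. Along the accepting computation of $A$ on $z$ record, as in Lemma~\ref{lem:wbw}, a snapshot $(t_i,s_1^{(i)},p_1^{(i)},\dots,s_k^{(i)},p_k^{(i)})$ at every step at which the master enters a query state and at every step at which a component enters the (unique) marker $b$. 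There are at most $f(n)+k\in O(\log(n))$ such snapshots; since acceptance is in linear time by Lemma~\ref{lem:linear:time}, each time stamp $t_i$ and each position needs $O(\log(n))$ bits and each state needs $O(1)$ bits, so the whole list together with a snapshot of the accepting configuration can be written down with $O(\log^2(n))$ bits.

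It remains to argue that this information, together with $A$ and $m$, already determines $w$. A reconstruction program $P$ simulates $A$ on all $2^m$ words of $L_{poly,wbw}$ whose binary content has length $m$ --- the exact shape of such a word being fixed once $m$ is known --- and keeps the one that reproduces the recorded snapshots and the accepting configuration. If two distinct candidates $w$ and $x$ both reproduced them, then the hybrid word $\dollar w_1\cdots w_m\,b\,a^{1}x_1x_1\,a^{3}x_2x_2\cdots a^{2m-1}x_mx_m\,\kand$, which does not belong to $L_{poly,wbw}$, would also be accepted by $A$: before any component crosses $b$ it reads only the common prefix and therefore behaves as on the $w$-word, after it crosses $b$ it reads only the common suffix and therefore behaves as on the $x$-word, and because all snapshots agree, the crossing steps, the query steps and all the states and positions involved coincide, so the two partial computations glue to an accepting computation on the hybrid word. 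This contradiction shows that $P$ identifies $w$, whence $C(w)\in O(\log^2(n))=o(\sqrt{n})=o(|w|)$, which contradicts $C(w)\geq|w|$ once $w$ is long enough.

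The step I expect to be the main obstacle is the gluing used in the last paragraph: one has to verify with care that the behaviour of every component splits cleanly into a ``prefix phase'' and a ``suffix phase'' around the single marker $b$ --- in particular that returning communications, which reset a component to its initial state without moving it, respect this split --- and that the snapshots recorded at the query steps constrain the communication pattern tightly enough that the prefix part of the $w$-computation and the suffix part of the $x$-computation remain mutually consistent. This is essentially the reasoning already carried out for $L_{wbw}$; the genuinely new point here is only the bookkeeping, namely that $f\in O(\log(n))$ forces a description of size $o(\sqrt{n})$ for a string of length $\Theta(\sqrt{n})$.
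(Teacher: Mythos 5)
Your proposal is correct and follows essentially the same route as the paper: the positive part reuses the construction idea of Lemma~\ref{lem:expoww} (combined with the adjacent-block comparison of Example~\ref{exa:poly}), and the negative part is the snapshot/incompressibility argument of Lemma~\ref{lem:wbw}, with the decisive count that $O(\log n)$ snapshots of $O(\log n)$ bits each give a description of size $O((\log n)^2)=o(m)$ for a string of length $m=\Theta(\sqrt{n})$. The gluing/uniqueness step you single out is precisely the argument the paper invokes by reference to Lemma~\ref{lem:wbw}, so no additional difficulty arises there.
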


\begin{proof}
Using the construction idea of Lemma~\ref{lem:expoww}, 
one shows $L_{poly,wbw}\in \lfam(\fdrcpcfa{O(\sqrt{n})}(3))$.

The claimed non-containment is shown similarly to Lemma~\ref{lem:wbw}: 
in contrast to the assertion, we assume that $L_{poly,wbw}$ is accepted by some
$\fdrcpcfa{f}(k)$ \mbox{$A=\langle \Sigma, A_1, A_2, \ldots, A_k, Q, \rightend\rangle$} 
with $f(n)\in O(\log(n))$. Let
\[
z=\dollar w_1 w_2 \cdots w_m b a^{1} w_1w_1 a^{3} w_2w_2 a^{5} w_3w_3 \cdots
a^{2m-1} w_mw_m  \kand \in L_{poly,wbw},
\] 
where
$w=w_1w_2\cdots w_m$, and  $K_0 \vdash \cdots \vdash K_{acc}$
be the accepting computation on input $z$, where $K_0$ is the
initial configuration and $K_{acc}$ is an accepting configuration.

We use again an incompressibility argument and write down
the list of snapshots of configurations in which communication 
takes place and the accepting configuration~$K_{acc}$, and 
descriptions of $A$ and $|w|$. 
Similar to the proof of Lemma~\ref{lem:wbw}, a 
program~$P$ can be described which reconstructs $w$ uniquely from 
the information given.

Next, we determine the size of such a description. Program~$P$ and
the system~$A$ can be represented by a constant number of bits. The length
$|w|$ can be described by $\log(|w|)\in O(\log(m))$ bits. 
Since $|z|=3m+3+\sum_{i=1}^{m} 2i-1=3m+3+m^2$ and acceptance
is in linear time (Lemma~\ref{lem:linear:time}), each time step
can be represented by $O(\log(|z|))= O(\log(m^{2}))$ bits.
Moreover, the $k$ states can be described by $O(1)$ bits, 
and the $k$ positions by $k\cdot\log(|z|)=k\cdot\log(m^{2}+3m+3)
\in O(\log(m))$ bits. So, altogether one snapshot can be represented by
$O(\log(m))$ bits. 
Since at most $f(|z|)\in O(\log(|z|))= O(\log(m))$ snapshots have to be listed,
the list of all snapshots can be described by $O((\log(m))^{2})$ bits.
Therefore, the total size of a description of $w$ is bounded by $O((\log(m))^{2})$ 
as well.
Thus, the Kolmogorov complexity $C(w)$ of $w$ is bounded by $O((\log(m))^{2})$.
On the other hand, there are binary strings $w$ of arbitrary length
such that $C(w) \ge |w|=m$. This is a contradiction for $w$ being long enough.
\end{proof}

The previous theorems showed that there are proper inclusions
\[
\lfam(\fdrcpcfa{O(\log(n))}(k)) \subset \lfam(\fdrcpcfa{O(\sqrt{n})}(k))
\]
for every $k \ge 3$, and
\[
\lfam(\fdrcpcfa{O(\sqrt{n})}(k)) \subset \lfam(\fdrcpcfa{O(n)}(k))
\]
for every $k \ge 2$.

Later, we will prove an infinite hierarchy in between
the classes $\lfam(\fdrcpcfa{O(\log(n))}(k))$ and $\lfam(\fdrcpcfa{O(\sqrt{n}))}(k)$,
for every $k \ge 4$.

\section{Decidability and Undecidability Results}\label{sec:decidability}

\subsection{Undecidability of Emptiness and Classical Questions}

First, we show undecidability of the classical questions for models with
a logarithmic amount of communication. To this end, we adapt the construction
given in~\cite{bordihn:2011:uhrpcfa} which is based on the valid computations of
\emph{one-way cellular automata} ($\oca$), a parallel computational model 
(see, for example, \cite{kutrib:2008:ca-cpv,kutrib:2009:calt}).
More precisely, the undecidability is shown by 
reduction of the corresponding problems for $\oca$
which are known not even to be semi-decidable~\cite{Malcher:2002:dccadq}.
To this end, histories of $\oca$ computations are encoded 
in single words that are called \emph{valid computations} 
(cf., for example,~\cite{Hopcroft:1979:itatlc:book}).

A one-way cellular automaton is a linear array of
identical deterministic finite automata, sometimes called cells.
Except for the leftmost cell each one is connected to its nearest neighbor
to the left. The state transition depends on the current state of a cell itself
and the current state of its neighbor, where the leftmost cell 
receives information associated with a boundary symbol 
on its free input line. The state changes
take place simultaneously at discrete time steps.
The input mode for cellular automata is called parallel. 
One can suppose that all cells fetch their input symbol 
during a pre-initial step.

More formally, an $\oca$ is a system $M=\langle S,\texttt{\#},
T,\delta,F \rangle$, where 
$S$ is the nonempty, finite set of cell states, 
$\texttt{\#} \notin S$ is the boundary state,
$T \subseteq S$ is the input alphabet, 
$F \subseteq S$ is the set of accepting cell states, and
$\delta : (S \cup \{\texttt{\#}\}) \times S \to S$ is the local 
transition function.

A configuration of an $\oca$ at some time step $t \ge 0$ is a
description of its global state, which is formally a mapping
$c_t:\{1,2,\dots,n\} \to S$, for $n\geq 1$.
The initial configuration at time~$0$ on input \mbox{$w=x_1x_2\ldots x_n$}
is defined by $c_{0,w}(i)=x_i$, $1 \le i \le n$. 
Let $c_t$, $t\geq 0$, be a configuration with $n\geq 2$, then its 
successor $c_{t+1}$ is defined as follows:
$c_{t+1}(1)=\delta(\texttt{\#},c_t(1))$ and 
$c_{t+1}(i)=\delta(c_t(i-1),c_t(i))$, $2\leq i\leq n$.

An input is accepted if at some time step
during its computation the rightmost cell enters an
accepting state.    
Without loss of generality and for technical reasons, one can assume that any
accepting computation has at least three steps.    

Now we turn to the valid computations of an $\oca$
$M=\langle S,\texttt{\#},T,\delta,F \rangle$.
The computation of a successor configuration $c_{t+1}$ of a given
configuration $c_{t}$
is written down in a sequential way as follows.
Assume~$c_{t+1}$ is computed cell by cell from left to right.
That is, we are concerned with subconfigurations of the form 
$c_{t+1}(1) \cdots c_{t+1}(i) c_{t}(i+1) \cdots c_t(n)$,
where $n$ is the length of the input. 
For technical reasons,
in $c_{t+1}(i)$ we have to store both the successor state, which is entered in
time step~$t+1$ by cell~$i$, and its former state.
In this way, the 
computation of the successor configuration of $M$ can be written as a 
sequence of $n$ subconfigurations, and 
configuration~$c_{t+1}$ can be represented by 
$w^{(t+1)}=w_1^{(t+1)} \cdots w_n^{(t+1)}$ such that $w_i^{(t+1)} \in \texttt{\#}
S^*(S\times S)S^*$, for $1 \le i \le n$, with \mbox{$w_i^{(t+1)}=\texttt{\#}c_{t+1}(1) 
\cdots c_{t+1}(i-1)(c_{t+1}(i),c_{t}(i)) c_{t}(i+1) \cdots c_t(n)$.}
The valid computations $\valc(M)$ are now defined to be 
the set of words of the form 
$w^{(0)} w^{(1)} \cdots w^{(m)}$, where $m\geq 3$,
$w^{(t)}\in (\texttt{\#}S^*(S\times S)S^*)^+$ are configurations of $M$, 
$1\le t\le m$, $w^{(0)}$ is an initial
configuration having the form $\texttt{\#}(T')^+$, where $T'$ 
is a primed copy of the input alphabet $T$ with $T'\cap S=\emptyset$, 
$w^{(m)}$ is an accepting configuration of the form 
$(\texttt{\#}S^*(S\times S)S^*)^*\texttt{\#}S^*(F \times S)$, 
and $w^{(t+1)}$ is the 
successor configuration of $w^{(t)}$, for $0 \leq t \leq m-1$.

For the constructions of $\drcpcfa$ accepting
the set $\valc(M)$, we provide an additional technical transformation of 
the input alphabet. Let $S'=S\cup T'$ and $A=\{\texttt{\#}\} \cup S' \cup S'^2$ 
be the alphabet over which
$\valc(M)$ is defined. We consider the mapping $f : A^+ \rightarrow 
(A \times A)^+$ which is defined for words of length at least two by 
$f(x_1 x_2 \cdots x_n)=[x_1,x_2][x_2,x_3] \cdots [x_{n-1},x_n]$. From now on we 
consider 
\mbox{$\valc(M) \subseteq (A \times A)^+$} to be the set of valid computations 
to which $f$ has been applied.
The set of \emph{invalid computations} $\invalc(M)$ 
is then the complement of $\valc(M)$ with respect to the alphabet~$A \times A$. 

The following example illustrates the definitions.

\begin{example}
We consider the following computation of an $\oca$ $M$ over the input 
alphabet $\{c,d\}$. The initial configuration is $c_0=(c,d,d)$. Let
the successor configurations be $c_1=(p_1,r_1,s_1)$, $c_2=(p_2,r_2,s_2)$, and 
$c_3=(p_3,r_3,s_3)$. Furthermore, let $s_3$ be an accepting state, that is, $cdd$ 
is an accepted input. These configurations are written down as sequences of
subconfigurations as follows.
\begin{eqnarray*}
w^{(0)} &=& \texttt{\#}c'd'd'\\
w^{(1)} &=& \texttt{\#}(p_1,c)dd\texttt{\#}p_1(r_1,d)d\texttt{\#}p_1r_1(s_1,d)\\
w^{(2)} &=& \texttt{\#}(p_2,p_1)r_1s_1\texttt{\#}p_2(r_2,r_1)s_1\texttt{\#}p_2r_2(s_2,s_1)\\
w^{(3)} &=& \texttt{\#}(p_3,p_2)r_2s_2\texttt{\#}p_3(r_3,r_2)s_2\texttt{\#}p_3r_3(s_3,s_2)
\end{eqnarray*}
Then, 
\begin{eqnarray*}
&&f(w^{(0)}w^{(1)}w^{(2)}w^{(3)})=[\texttt{\#},c'][c',d'][d',d'][d',\texttt{\#}]
[\texttt{\#},(p_1,c)][(p_1,c),d][d,d][d,\texttt{\#}]\\
&&[\texttt{\#},p_1][p_1,(r_1,d)][(r_1,d),d][d,\texttt{\#}][\texttt{\#},p_1][p_1,r_1][r_1,(s_1,d)][(s_1,d),\texttt{\#}]
[\texttt{\#},(p_2,p_1)]\\
&&[(p_2,p_1),r_1][r_1,s_1][s_1,\texttt{\#}][\texttt{\#},p_2][p_2,(r_2,r_1)][(r_2,r_1),s_1]
[s_1,\texttt{\#}][\texttt{\#},p_2][p_2,r_2]\\
&&[r_2,(s_2,s_1)][(s_2,s_1),\texttt{\#}][\texttt{\#},(p_3,p_2)][(p_3,p_2),r_2][r_2,s_2][s_2,\texttt{\#}][\texttt{\#},p_3][p_3,(r_3,r_2)]\\
&&[(r_3,r_2),s_2][s_2,\texttt{\#}][\texttt{\#},p_3][p_3,r_3][r_3,(s_3,s_2)]
\end{eqnarray*}
is a valid computation of $M$.
\end{example}

The length of a valid computation can be easily calculated.

\begin{lemma}\label{lem:lengthvalc}
Let $M$ be an $\oca$ on input $w_1w_2 \cdots w_n$ which
is accepted after $t$ time steps. Then the length
of the corresponding valid computation is $n+(n+1)\cdot n \cdot t$.
\end{lemma}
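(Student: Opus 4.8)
The plan is to read the length off directly from the definition of $\valc(M)$, keeping track of two small subtleties: a state pair such as $(c_{\tau}(i),c_{\tau-1}(i))$ is a \emph{single} letter of the alphabet $A=\{\border\}\cup S'\cup S'^{2}$ over which $\valc(M)$ is defined before the transformation $f$, and $f$ shortens its argument by exactly one letter. First I would note that ``accepted after $t$ time steps'' means that $c_t$ is (the first) accepting configuration, so the valid computation belonging to this run is $w^{(0)}w^{(1)}\cdots w^{(t)}$; in other words $m=t$ in the notation fixed above.

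Next I would compute the length of each block before $f$ is applied. The initial block $w^{(0)}=\border(T')^{+}$ encodes the $n$ input cells, one primed letter each, preceded by the boundary symbol, so $|w^{(0)}|=n+1$. For $1\le\tau\le t$, the block $w^{(\tau)}$ is the concatenation $w_1^{(\tau)}w_2^{(\tau)}\cdots w_n^{(\tau)}$ of its $n$ subconfigurations, and each subconfiguration $w_i^{(\tau)}=\border\, c_{\tau}(1)\cdots c_{\tau}(i-1)(c_{\tau}(i),c_{\tau-1}(i))c_{\tau-1}(i+1)\cdots c_{\tau-1}(n)$ consists of the boundary symbol followed by $n$ letters, one per cell (the $i$-th being a pair, but still a single letter), hence $|w_i^{(\tau)}|=n+1$ and $|w^{(\tau)}|=n(n+1)$. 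Summing over all blocks gives
\[
|w^{(0)}w^{(1)}\cdots w^{(t)}|=(n+1)+t\cdot n(n+1)=(n+1)(tn+1)
\]
before the transformation is applied.

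Finally, since $f$ maps every word of length $N\ge 2$ to a word of length $N-1$, applying $f$ yields a valid computation of length $(n+1)(tn+1)-1=n+(n+1)\cdot n\cdot t$, which is the claimed value. I do not expect a real obstacle here; the only point that needs care is the bookkeeping of what counts as one letter — in particular that the pair recording the new and the former state of the cell currently being rewritten is a single symbol — together with the off-by-one contributed by $f$.
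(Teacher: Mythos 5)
Your proof is correct and is exactly the straightforward count the paper has in mind (the paper omits the proof, stating only that the length is "easily calculated"): each of the $t$ encoded configurations contributes $n$ subconfigurations of length $n+1$, the initial block contributes $n+1$, and the pairing map $f$ removes one symbol, giving $(n+1)(tn+1)-1=n+(n+1)\cdot n\cdot t$, consistent with the worked example (length $39$ for $n=3$, $t=3$).
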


The next lemma is the key tool for the reductions.

\begin{lemma}\label{lem:expovalc}
Let $M$ be an $\oca$. Then language 
\[
\valc\,'(M)=\{\,\dollar_1 x_1 x_2 \cdots x_m \dollar_2 a^{2^0} bb a^{2^1} bb \cdots bba^{2^{m-1}} bb \kand
\mid m \ge 1, x_1x_2 \cdots x_m \in \valc(M)\,\}
\]
belongs to $\lfam(\fdrcpcfa{O(\log(n))}(4))$. 
\end{lemma}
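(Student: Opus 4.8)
The plan is to combine the $O(\log n)$-communication technique for comparing exponentially growing $a$-blocks, as in Example~\ref{exa:expo} and Lemma~\ref{lem:expoww}, with the ability to verify membership of $x_1x_2\cdots x_m$ in $\valc(M)$. The key structural observation is the following: in $\valc(M)$, every subconfiguration $w_i^{(t+1)}$ has essentially the same length as the configuration $w^{(t)}$ it is derived from (up to the additional $\#$ and the pairing with former states), and the local transition function $\delta$ of $M$ needs to be checked only between consecutive subconfigurations at a fixed offset. After applying $f$, a symbol $[x_j,x_{j+1}]$ at one position must be matched against the symbol occurring one configuration-block later at the \emph{same relative position}, because $c_{t+1}(i)=\delta(c_t(i-1),c_t(i))$ is a strictly local rule, and the pairing encoded by $f$ lets a finite automaton see, in a single symbol, the two arguments $c_t(i-1),c_t(i)$ and compare them with the stored former state in the corresponding symbol of the next block. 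So the whole consistency check of $\valc(M)$ reduces to a collection of position-respecting equality/transition checks between symbols that are a fixed number of configuration-blocks apart.

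First I would let the master component $A_1$ and a ``block-length'' non-master component $A_2$ run exactly the $L_{expo}$-style protocol on the $a$-blocks $a^{2^0}bb\,a^{2^1}bb\cdots bb\,a^{2^{m-1}}bb$ in the suffix: the master reads a block at half speed while $A_2$ reads the next block at full speed, so that after the $i$-th communication the master has advanced $2^{i-1}$ further symbols into the suffix, i.e. the $i$-th communication happens at a time step that equals (a constant plus) the current position inside the suffix measured in $a$'s. Then a third non-master component $A_3$ is driven by these same communication clock-ticks to ``sample'' one symbol $x_j$ of the prefix $x_1\cdots x_m\in\valc(M)$ each time the master queries, remembering it in its finite-state control exactly as $A_3$ does in Lemma~\ref{lem:expoww}, so the master can verify that the prefix symbol being read matches what is required. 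Finally a fourth component $A_4$ is used to perform the long-distance consistency check of $\valc(M)$: it is reset and re-positioned at each communication so that at communication step $j$ it sits on the symbol one configuration-block ahead of the symbol $A_3$ is exposing, and reports (a finite amount of) information about it to the master, who then checks the $\delta$-transition and the ``former-state'' bookkeeping locally. The initial-configuration shape $\#(T')^+$ and the accepting-configuration shape ending in $\#S^*(F\times S)$ are regular conditions checkable on the fly by the master using finite control. The exponential growth of the $a$-blocks is what makes this work with only $O(\log n)$ communications: there are $m$ blocks, and since the total length of the suffix is $\Theta(2^m)$ we have $m=O(\log n)$, while each of the four roles contributes only $O(1)$ communications per block, so the total is $O(m)=O(\log n)$.

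The main obstacle, and the part I would spend the most care on, is the \emph{synchronization/indexing}: the prefix $x_1\cdots x_m$ has length $\Theta(m)$, but the suffix $a$-blocks have lengths $2^0,2^1,\dots,2^{m-1}$, so a single communication clock cannot simultaneously advance one step per prefix symbol \emph{and} keep pace with a doubling $a$-block. The trick—already present in Lemma~\ref{lem:expoww}, where the $a^{2^i}$ blocks are interleaved with the $w_iw_i$ blocks—is that the prefix-matching component $A_3$ only needs to be \emph{reset once per block} and then idles with $\lambda$-moves, so it consumes exactly one prefix symbol per communication, while the $a$-block timing is entirely handled by $A_1$ reading at half speed against $A_2$ reading at full speed. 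Getting $A_4$'s position right is the genuinely new difficulty: it must point not at the matching prefix symbol but at the symbol a full configuration-block later in the prefix, so its reset-and-idle schedule must be offset by the (fixed, but input-dependent in the first block only) block length $n$ of $M$'s configurations; one handles this by having $A_4$ count off one configuration-block's worth of symbols using an endmarker-free internal counter seeded from the initial block $\#(T')^+$, whose length it can measure during the first phase. Once the schedules of $A_1,A_2,A_3,A_4$ are lined up, the correctness proof is the same two-way argument as before: a word is accepted iff the block lengths are exactly the powers of two (checked by $A_1/A_2$), the prefix has the correct local structure and all inter-block $\delta$-transitions hold (checked via $A_3/A_4$ and the master's finite control), and the prefix begins and ends with the required initial/accepting configuration shapes; and the communication count is $O(m)=O(\log n)$ by the length estimate $|w|=\Theta(2^m)$.
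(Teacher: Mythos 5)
Your proposal contains a genuine gap, and it stems from an unnecessary complication. You insist on interleaving the $\valc(M)$-consistency check with the exponential $a$-blocks of the suffix (one check per block, driven by the communication clock of the $L_{expo}$-protocol), and to make that work you need component $A_4$ to sit exactly one (sub)configuration-block ahead of $A_3$ in the prefix. The offset you need is the length of a configuration of $M$, which is unbounded, and your proposed mechanism --- letting $A_4$ ``count off one configuration-block's worth of symbols using an endmarker-free internal counter seeded from the initial block $\texttt{\#}(T')^+$, whose length it can measure during the first phase'' --- is impossible in this model: the components are deterministic \emph{finite} automata, so no component can store or measure an input-dependent length in its state. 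Unbounded offsets between heads can only be established and maintained by spending communications (timing one head against another), not by finite control. As written, the positioning of $A_4$, on which your entire consistency check rests, cannot be realized.

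The complication is also unnecessary, and removing it is exactly how the paper proceeds. The prefix $x_1x_2\cdots x_m$ has length $m$, while the suffix forces $|w|\in\Theta(2^m)$, so the communication budget $O(\log n)$ is $O(m)$ --- you may spend a \emph{constant number of communications per prefix symbol} and still stay within the bound. Hence there is no need to be economical on the $\valc$-check at all: one simply reuses the known $\fdrcpcfa{O(n)}(3)$ construction for $\valc(M)$ from~\cite{bordihn:2011:uhrpcfa} (master plus $A_2$ comparing consecutive subconfigurations with a head offset maintained by per-symbol queries, $A_3$ checking the format), run on the prefix with $\dollar_2$ interpreted as the endmarker; this costs only $O(m)$ communications. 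Afterwards the master and $A_2$ stand on $\dollar_2$ and the suffix $a^{2^0}bb\,a^{2^1}bb\cdots a^{2^{m-1}}bb\,\kand$ is verified exactly as in Lemma~\ref{lem:expoww}, with the fourth component stepping through $x_1,\dots,x_m$ one symbol per query so that the number of $a$-blocks is matched against $m$. Your timing analysis of the $L_{expo}$-style phase and the final count $O(m)=O(\log n)$ are fine; the flaw is confined to the $\valc$-checking mechanism, which should be replaced by the linear-communication check on the short prefix rather than repaired.
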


\begin{proof} 
In~\cite{bordihn:2011:uhrpcfa} a $\fdrcpcfa{O(n)}(3)$ is constructed
that accepts $\valc(M)$.
Basically, the master component $A_1$ and component $A_2$ are used to verify that
after every subconfiguration the correct successor subconfiguration is given,
whereas component $A_3$ is used to check the correct format of the input.
This construction can be implemented identically for the present construction
if we interpret $\dollar_2$ as the right endmarker.
Additionally, component $A_4$ is used in the same way as component $A_3$ in
the construction of Lemma~\ref{lem:expoww}, that is, initially it
reads $\dollar_1$ and $x_1$, stores $x_1$ in its state,
and waits at position 2 until it is queried. After being reset to its
initial state, it again reads the next input symbol, stores it, and waits.

When $x_1x_2 \cdots x_m \in \valc(M)$ is tested, the master $A_1$ and component
$A_2$ are both located at $\dollar_2$. The second part of the input is now
tested along the line of the construction given in the proof of
Lemma~\ref{lem:expoww}, where the master plays the role of the master,
component $A_2$ the role of component $A_2$, and component~$A_4$ the role
of component $A_3$.

The length of a word $w\in \valc\,'(M)$ is
$|w|=3m+3+\sum_{i=0}^{m-1}2^i=2^{m}+3m+2$,
for some $m\geq 1$. 
The test whether $x_1x_2 \cdots x_m$ belongs to $\valc(M)$ requires $O(m)$
communications. For the remaining tests additional $O(m)$ communications
are necessary as the proof of Lemma~\ref{lem:expoww} shows.
So, altogether, $O(m)$ communications are sufficient
which is of order $O(\log(|w|))$.
\end{proof} 

The set of \emph{invalid computations} $\invalc\,'(M)$ 
is simply defined to be the complement of $\valc\,'(M)$ with respect to the alphabet
$\{a,b,\dollar_1,\dollar_2,\kand\} \cup (A \times A)$. 

\begin{lemma}\label{lem:expoinvalc}
Let $M$ be an OCA. Then language 
$\invalc\,'(M)$ belongs to 
$\lfam(\fdrcpcfa{O(\log(n))}(4))$. 
\end{lemma}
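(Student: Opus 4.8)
The plan is to show that $\invalc\,'(M)$ can be accepted by a $\fdrcpcfa{O(\log(n))}(4)$ by exploiting the structural rigidity of words in $\valc\,'(M)$: a word fails to be a valid computation for one of a small number of syntactically localizable reasons, and for each such reason the amount of communication needed to detect it is $O(\log(n))$ (measured in the length $n$ of the input word, which by Lemma~\ref{lem:expovalc} is of order $2^m$ when the input has the ``correct'' shape with $m$ subconfigurations). First I would classify the ways a word $w$ over the alphabet $\{a,b,\dollar_1,\dollar_2,\kand\} \cup (A\times A)$ can fail to lie in $\valc\,'(M)$: (i)~$w$ does not have the gross format $\dollar_1 u \dollar_2 a^+ bb a^+ bb \cdots bb a^+ \kand$ with the interior over $(A\times A)^*$ and the exponents of the $a$-blocks being exactly $2^0, 2^1, \ldots, 2^{m-1}$ for some $m\ge 1$; (ii)~$w$ has that format and the right number $m$ of $a$-blocks, but $u = x_1\cdots x_m$ is \emph{not} a valid computation of $M$ (wrong initial configuration format, wrong accepting configuration format, wrong subconfiguration format, or some subconfiguration is not the correct successor of its predecessor). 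A standard design idea is that one can accept a \emph{union} of these error conditions with a deterministic returning centralized system as long as the distinct error checks can be made mutually exclusive by their triggering prefixes, or — more robustly — by noting that $\invalc\,'(M)$ is already known-style expressible: the construction in~\cite{bordihn:2011:uhrpcfa} yields an $\fdrcpcfa{O(n)}$ for $\invalc(M)$, and the same combinatorial case analysis lifts here.

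The key steps, in order, are as follows. First, I would build a component that checks the gross format $\dollar_1 (A\times A)^* \dollar_2 (a^+ bb)^* a^+ \kand \rightend$ purely by finite control, with no communication; any deviation here is accepted outright (the word is invalid for a format reason), and this check is conflict-free with the remaining checks because it is decided on a prefix before any $a$-block comparison begins. Second, reusing the $a$-block-length machinery of Example~\ref{exa:expo} and Lemma~\ref{lem:expoww} — the master reading left $a$-blocks at half speed, a helper component reading the right $a$-block at full speed — I would detect the first position at which the doubling pattern $|a\text{-block}_{i+1}| = 2\,|a\text{-block}_i|$ fails; this is exactly $O(m) = O(\log n)$ communications, one per $bb$ separator, and the comparison can be aborted and the word accepted as soon as a mismatch is found. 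Third, for error type (ii), I would invoke the $\fdrcpcfa{O(n)}(3)$ for $\invalc(M)$ from~\cite{bordihn:2011:uhrpcfa}, but crucially \emph{reinterpret $\dollar_2$ as the right endmarker} so that this sub-system runs only on the prefix $\dollar_1 x_1\cdots x_m \dollar_2$ — exactly the trick used in the proof of Lemma~\ref{lem:expovalc}. The number of communications that sub-system needs is $O(m)$, since the successor-subconfiguration check, as in Lemma~\ref{lem:expoww} and Lemma~\ref{lem:expovalc}, queries once per subconfiguration boundary, and with the fourth component $A_4$ caching symbols of the prefix in the manner of $A_3$ in Lemma~\ref{lem:expoww}, each matched symbol costs $O(1)$ communications amortized across $O(m)$ steps. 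Finally I would combine these: one deterministic centralized returning system whose master, after clearing the format check, branches (on the basis of what it has read, deterministically) into whichever of the three error-detection sub-procedures is appropriate, and accepts if any one of them finds its error. Total communication: $O(m) = O(\log|w|)$ when $|w| \sim 2^m$, and trivially $O(\log|w|)$ — indeed $O(1)$ — on the shorter ill-formatted inputs where the format check already suffices.

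The main obstacle I expect is the usual one for complementation with \emph{deterministic} devices: the natural description of $\invalc\,'(M)$ is a union of error conditions, and a deterministic machine cannot ``guess'' which condition to verify. The resolution is that each error condition here is in fact witnessable by the \emph{first} point (scanning left to right) at which something goes wrong, and ``first point of failure'' is a deterministic notion — so the master can commit, as it reads, to checking format first, then $a$-block doubling, then (via the $\dollar_2$-as-endmarker trick) the $\valc(M)$ membership of the prefix, accepting the instant any check fires and rejecting only if it runs the whole gauntlet cleanly. Making these three passes coexist in one master without their query schedules colliding — and ensuring the returning resets of the helper components $A_2, A_3, A_4$ are consistent across the hand-off from the ``prefix'' phase to the ``$a$-blocks'' phase — is the part that requires care, but it is a routine (if tedious) merge of the constructions already given in Lemma~\ref{lem:expoww} and Lemma~\ref{lem:expovalc} plus the $\invalc(M)$ construction of~\cite{bordihn:2011:uhrpcfa}; no new idea beyond those is needed, and the communication bound $O(\log n)$ is inherited additively from the three $O(m)$-communication sub-procedures.
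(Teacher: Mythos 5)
Your proposal is correct in substance and rests on the same key observation as the paper: since the system of Lemma~\ref{lem:expovalc} is deterministic, a word lies in $\invalc\,'(M)$ exactly when that system's unique computation fails some check, and the first failure can be turned into acceptance on the spot. The paper's proof is, however, much more economical: it does not re-decompose $\invalc\,'(M)$ into error types, nor does it import a separate $\invalc(M)$ construction from \cite{bordihn:2011:uhrpcfa}; it simply takes the $\fdrcpcfa{O(\log(n))}(4)$ of Lemma~\ref{lem:expovalc} verbatim and interchanges accepting and rejecting halting behavior --- the master's state $accept$ becomes a halting non-accepting state $reject$, and every component that would halt rejecting is instead sent into a halting accepting state. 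Your ``first point of failure'' remark is precisely what makes this swap sound, so the extra machinery you add (explicit case analysis, a branching master, the merge of three sub-procedures) buys nothing beyond the one-line swap and introduces small inaccuracies: the gross format is $\dollar_1(A\times A)^*\dollar_2(a^+bb)^+\kand$ (the last $a$-block is also followed by $bb$), and the checks cannot be scheduled ``format, then doubling, then $\valc(M)$'' in time, since the one-way heads reach the $\valc(M)$ portion before the $a$-blocks, so the checks are necessarily interleaved exactly as in the single machine of Lemma~\ref{lem:expovalc}. One caveat applies equally to your argument and to the paper's: neither explicitly verifies the $O(\log(n))$ query bound on \emph{accepted} (invalid) inputs whose prefix between $\dollar_1$ and $\dollar_2$ is long while the $a$-block suffix is short; since the paper's own proof passes over this silently, it is not counted as a gap specific to your proposal.
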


\begin{proof}
To accept the set of invalid computations $\invalc\,'(M)$ almost the same construction
as for Lemma~\ref{lem:expovalc} can be used. The only adaption concerns
acceptance and rejection. Since the only possibility to accept is that the
master halts in state $accept$ while the other components are non-halting, 
accepting computations can be made rejecting
by sending the master into a halting non-accepting state $reject$ instead.
In order to make rejecting computations accepting, it is now sufficient
to send the components into some halting accepting state whenever
they would halt rejecting.
\end{proof}

\begin{theorem}\label{thm:dec} 
For any degree $k\geq 4$, emptiness, finiteness, infiniteness,
universality, inclusion, equivalence, regularity, and context-freeness
are not semi-decidable for $\fdrcpcfa{O(\log(n))}(k)$.  
\end{theorem}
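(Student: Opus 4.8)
The plan is to reduce the corresponding non-semi-decidability results for one-way cellular automata, which are known not even to be semi-decidable by~\cite{Malcher:2002:dccadq}, to the setting of $\fdrcpcfa{O(\log(n))}(k)$ via the valid-computation languages constructed in Lemmas~\ref{lem:expovalc} and~\ref{lem:expoinvalc}. The starting observation is that an $\oca$ $M$ accepts the empty language if and only if $\valc(M)=\emptyset$ if and only if $\valc\,'(M)=\emptyset$; since $\valc\,'(M)\in\lfam(\fdrcpcfa{O(\log(n))}(4))$ by Lemma~\ref{lem:expovalc} and a $\fdrcpcfa{O(\log(n))}(4)$ for it can be effectively constructed from $M$, a semi-decision procedure for emptiness of $\fdrcpcfa{O(\log(n))}(4)$ would yield one for emptiness of $\oca$, a contradiction. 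For larger $k$ one simply adds $k-4$ dummy components that make $\lambda$-moves forever and never communicate, so membership in $\lfam(\fdrcpcfa{O(\log(n))}(k))$ follows and the bound on the number of communications is unchanged.

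Next I would handle finiteness and infiniteness. Here the clean trick is to observe that $L(M)$ is infinite if and only if $\valc(M)$ is infinite if and only if $\valc\,'(M)$ is infinite (distinct inputs, or the same input accepted after different numbers of steps, give words of different lengths, and conversely a finite $\valc(M)$ forces $M$ to accept only finitely many words), so infiniteness and finiteness of $\oca$ — also not semi-decidable by~\cite{Malcher:2002:dccadq} — reduce directly to infiniteness and finiteness of $\fdrcpcfa{O(\log(n))}(k)$. For universality I would use $\invalc\,'(M)$ instead: $\invalc\,'(M)$ equals the full set $(\{a,b,\dollar_1,\dollar_2,\kand\}\cup(A\times A))^*$ exactly when $\valc\,'(M)=\emptyset$, i.e. when $L(M)=\emptyset$, and by Lemma~\ref{lem:expoinvalc} an accepting device for $\invalc\,'(M)$ is effectively obtainable; thus a semi-decision procedure for universality of $\fdrcpcfa{O(\log(n))}(k)$ would semi-decide emptiness of $\oca$. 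Inclusion and equivalence then follow from universality by the standard argument: fixing a $\fdrcpcfa{O(\log(n))}(k)$ $B$ accepting the full language $\Sigma^*$ over the relevant alphabet, $L(A)=\Sigma^*$ iff $\Sigma^*\subseteq L(A)$ iff $L(B)\subseteq L(A)$ iff $L(A)=L(B)$, so semi-deciding any of inclusion or equivalence semi-decides universality, hence emptiness of $\oca$.

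For regularity and context-freeness I would again lean on $\invalc\,'(M)$: when $L(M)=\emptyset$ the language $\invalc\,'(M)$ is the full (regular, hence context-free) language, and when $L(M)\neq\emptyset$ I claim $\invalc\,'(M)$ is non-context-free, so that (non-)regularity and (non-)context-freeness of $\invalc\,'(M)$ decide emptiness of $L(M)$. The non-context-freeness when $\valc\,'(M)\neq\emptyset$ is established in the style of~\cite{bordihn:2011:uhrpcfa}: one intersects $\invalc\,'(M)$ with a suitable regular set isolating the $\dollar_1 x_1\cdots x_m\dollar_2 a^{2^0}bb\cdots bba^{2^{m-1}}bb\kand$ format around a fixed accepted input and uses the exponential $a$-block structure together with a pumping argument (or Parikh's theorem) to derive a contradiction, exactly as the exponential padding in $L_{expo}$ was designed to allow. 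This is the step I expect to be the main obstacle, since it requires carefully choosing the regular filter and the pumped positions so that pumping provably leaves $\valc\,'(M)$ — and the argument has to be robust to the double-marker $bb$ separators and the pair-alphabet encoding $A\times A$ introduced before the lemma.

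Having assembled these reductions, the theorem follows: each of emptiness, finiteness, infiniteness, universality, inclusion, equivalence, regularity, and context-freeness for $\fdrcpcfa{O(\log(n))}(k)$ has been shown to semi-decide a property of $\oca$ that is not semi-decidable, and the effectivity of Lemmas~\ref{lem:expovalc} and~\ref{lem:expoinvalc} (together with the dummy-component padding for $k>4$) guarantees the reductions are computable. Hence none of the listed problems is semi-decidable for any degree $k\geq 4$.
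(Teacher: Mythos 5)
Most of your reductions are exactly the standard ones the paper intends (its proof is only the sketch ``reduce the non-semi-decidable $\oca$ problems via $\valc\,'(M)$ and $\invalc\,'(M)$ using Lemmas~\ref{lem:expovalc} and~\ref{lem:expoinvalc}''): emptiness via $\valc\,'(M)$, universality via $\invalc\,'(M)$, inclusion and equivalence from universality with a fixed trivial device, and dummy components for $k>4$ are all fine. The genuine gap is in your treatment of regularity and context-freeness, which you try to reduce from \emph{emptiness} by claiming that $\invalc\,'(M)$ is non-context-free whenever $L(M)\neq\emptyset$. That claim is false: if $M$ accepts, say, exactly one input and its rightmost cell is accepting at exactly one time step, then $\valc(M)$, and hence $\valc\,'(M)$, is a single word, so $\invalc\,'(M)$ is co-finite and therefore regular (and context-free) even though $L(M)\neq\emptyset$. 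No choice of regular filter and pumping positions can rescue the step you yourself flag as the main obstacle, because any pumping/Parikh argument needs arbitrarily long words in $\valc\,'(M)$, i.e., it needs $\valc\,'(M)$ to be infinite. The standard technique (as in~\cite{bordihn:2011:uhrpcfa}) therefore ties regularity/context-freeness of the constructed language to \emph{finiteness} of $L(M)$ -- also not semi-decidable for $\oca$ by~\cite{Malcher:2002:dccadq} -- and not to emptiness.

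A related, smaller gap sits in your finiteness/infiniteness step: you assert $L(M)$ infinite iff $\valc(M)$ infinite, but both justifications you give ($L(M)$ infinite $\Rightarrow$ $\valc(M)$ infinite, and finite $\valc(M)$ $\Rightarrow$ finite $L(M)$) are the same direction. The converse can fail as the definitions stand: a single accepted input whose rightmost cell re-enters accepting states at infinitely many time steps yields infinitely many valid computations, so $\valc\,'(M)$ can be infinite while $L(M)$ is finite, and your reduction would then give the wrong answer for infiniteness (and, once repaired as above, for regularity/context-freeness too). Both issues are fixable by first normalizing $M$ computably so that each cell remembers whether it has already been in an accepting state and only ``first-time'' accepting states count; then each accepted input contributes exactly one valid computation, $\valc\,'(M)$ is finite iff $L(M)$ is finite, and the finiteness-based reductions for finiteness, infiniteness, regularity and context-freeness go through. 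As written, however, the regularity/context-freeness part of your proposal is incorrect and the finiteness equivalence is incomplete.
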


\begin{proof} 
All these problems are known to be non-semi-decidable for
$\oca$~\cite{Malcher:2002:dccadq}. By standard techniques (cf., for example,~\cite{bordihn:2011:uhrpcfa})
the $\oca$ problems are reduced to $\fdrcpcfa{O(\log(n))}(k)$
via the valid and invalid computations and Lemmas~\ref{lem:expovalc}
and~\ref{lem:expoinvalc}.
\end{proof}

\subsection{Undecidability of Communication Boundedness}

This subsection is devoted to questions concerning the decidability
or computability of the communication bounds. In principle,
we deal with three different types of problems. The first
type is to decide for a given $\drcpcfa(k)$ $A$
and a given function $f$ whether or not $A$ is communication 
bounded by $f$. The next theorem solves this problem negatively for 
all non-trivial communication bounds and all degrees~$k \ge 3$.

\begin{theorem}\label{thm:undec:comm}
Let $k \ge 3$ be any degree,  $f \in o(n)$, and $A$ be a $\drcpcfa(k)$. 
Then it is not semi-decidable whether
$A$ is communication bounded by $f$.
\end{theorem}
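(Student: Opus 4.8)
The plan is to reduce from a known non-semi-decidable problem for one-way cellular automata, following the pattern already used for Theorem~\ref{thm:dec}. The natural candidate is emptiness of $\oca$: given an $\oca$ $M$, it is not even semi-decidable whether $L(M)=\emptyset$ (equivalently, whether $\valc(M)=\emptyset$), by~\cite{Malcher:2002:dccadq}. Given $M$, I would construct a $\drcpcfa(k)$ $A_M$ (with $k\ge 3$) such that $A_M$ is communication bounded by the given $f\in o(n)$ if and only if $\valc(M)=\emptyset$ (equivalently $L(M)=\emptyset$). Then semi-decidability of ``$A$ is communication bounded by $f$'' would yield semi-decidability of $\oca$-emptiness, a contradiction.

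The key idea for the construction is to make $A_M$ recognize a language that is ``cheap'' when $M$ accepts nothing and ``expensive'' otherwise. Concretely, I would have $A_M$ accept a language of the form $L_{\mathrm{cheap}} \cup L_M$, where $L_{\mathrm{cheap}}$ is some fixed regular (or otherwise $O(1)$-communication) language serving as padding, and $L_M$ consists of words encoding, for each valid computation $x\in\valc(M)$, a blow-up forcing many communications — precisely the kind of construction used in Lemma~\ref{lem:expovalc}, where a word of length roughly $2^m$ encoding $x$ of length $m$ is accepted using $\Theta(m)=\Theta(\log n)$ communications. I would instead use a \emph{linear} padding, i.e.\ encode $x\in\valc(M)$ inside a word of length $\Theta(m)$ (for instance $\dollar_1 x \dollar_2 \#^{|x|} \kand$, with the $\oca$-verification run as in~\cite{bordihn:2011:uhrpcfa} using $\Theta(m)$ communications), so that accepting such words costs $\Theta(n)$ communications, which is \emph{not} $O(f(n))$ since $f\in o(n)$. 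The master and the two auxiliary components (hence $k\ge 3$) carry out the $\valc(M)$-check exactly as in the cited constructions. If $\valc(M)=\emptyset$, the only accepted words are the padding words, needing $O(1)$ communications, so $A_M$ is communication bounded by $f$ (as $f$ is non-trivial, $f(n)\ge c$ eventually — and if $f$ is eventually $0$ one pads so that even the short words are empty; the statement's ``$f\in o(n)$'' should be read together with the surrounding text's ``non-trivial''). If $\valc(M)\neq\emptyset$, then for each $m$ with a valid computation of length $m$ there is an accepted input of length $\Theta(m)$ requiring $\Theta(m)$ communications; choosing larger and larger valid computations (they exist: if one accepting computation exists, longer ones do too, e.g.\ by padding the $\oca$ input or by the $\ge 3$-steps convention combined with longer inputs) gives accepted words of unbounded length on which the communication count exceeds $f$, so $A_M$ is \emph{not} communication bounded by $f$.

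I would then verify the equivalence: $A_M$ is $f$-communication bounded $\iff$ $\valc(M)=\emptyset$. The forward direction is the contrapositive argument just sketched; the backward direction is that with no valid computations the expensive branch of $A_M$ never accepts, leaving only $O(1)$-communication words. This reduction is effective in $M$, so non-semi-decidability of $\oca$-emptiness transfers. The main obstacle I anticipate is the quantitative bookkeeping: I must ensure that \emph{every} accepting computation of $A_M$ on a padding word uses at most $f(n)$ communications (not just some accepting computation), and that on encoded-valid-computation words the \emph{minimum} over accepting computations still grows like $\Theta(n)$, so that no clever alternative accepting run sneaks under the bound. This is handled by designing $A_M$ so that on the encoded words the structure of the input forces a query essentially per subconfiguration block (as in Lemma~\ref{lem:expovalc} and~\cite{bordihn:2011:uhrpcfa}), leaving no room for a low-communication accepting computation, while padding words are accepted by a trivially deterministic $O(1)$-communication routine; a secondary subtlety is matching the degree bound $k\ge 3$, which the $\valc(M)$-construction of~\cite{bordihn:2011:uhrpcfa} already meets.
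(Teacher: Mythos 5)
There is a genuine gap in your reduction, and it is exactly the point where the hypothesis $f\in o(n)$ has to do its work. You make the set of ``expensive'' accepted words of $A_M$ coincide with (an encoding of) $\valc(M)$, and you argue that if $\valc(M)\neq\emptyset$ then there are accepted words of unbounded length on which the communication count exceeds $f$. But your justification --- ``if one accepting computation exists, longer ones do too, e.g.\ by padding the $\oca$ input or by the $\ge 3$-steps convention'' --- is false: an $\oca$ may accept exactly one word, and its rightmost cell may enter an accepting state at exactly one time step, so $\valc(M)$ can be finite (even a single word) while being non-empty. In that case $A_M$ accepts only finitely many expensive words, of some fixed lengths $n_1,\dots,n_s$, and since $f\in o(n)$ is only an asymptotic constraint, $f$ may be arbitrarily large on those finitely many lengths (e.g.\ huge up to some threshold and logarithmic afterwards). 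Then $A_M$ is communication bounded by $f$ even though $\valc(M)\neq\emptyset$, so your claimed equivalence ``$A_M$ is $f$-bounded $\iff$ $\valc(M)=\emptyset$'' does not hold, and the reduction from $\oca$-emptiness breaks. (A secondary, repairable imprecision: you need the query count on the expensive words to be bounded \emph{below} by a constant fraction of the input length; the cited $\valc(M)$-construction is only stated as using $O(n)$ communications, so you would have to force queries explicitly, e.g.\ one per padding symbol. The worry about ``every'' versus ``some'' accepting computation is moot, since the systems are deterministic.)

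The missing idea, and the way the paper proceeds, is to decouple the unbounded growth from the witness language: the paper reduces from emptiness of $\drcpcfa(k)$ itself (non-semi-decidable for $k\ge 3$ by the cited earlier work) and builds $A'$ accepting $a^*\dollar L(A)$, where the master queries at every step while crossing the leading $a$-block. Then a \emph{single} word $w\in L(A)$ yields infinitely many accepted words $a^n\dollar w$ of unbounded length with at least $n$ communications, and since $f\in o(n)$ we eventually have $n>f(n+1+|w|)$; conversely $L(A)=\emptyset$ makes $L(A')=\emptyset$, which is vacuously $f$-bounded. Your construction can be repaired in the same spirit by prepending such an $a^*\dollar$ pad (with forced per-step queries) to your encoded words, so that non-emptiness of $\valc(M)$ produces arbitrarily long accepted inputs with linearly many communications; without such a device the argument does not go through.
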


\begin{proof}
Let $A$ be a $\drcpcfa(k)$ with $k \ge 3$ accepting some language $L(A) \subseteq \Sigma^*$.
We take two new symbols $\{a,\dollar\} \cap \Sigma=\emptyset$ and construct a $\drcpcfa(k)$ $A'$
accepting language $a^* \dollar L(A)$. The idea of the construction is that, initially, all components
move synchronously across the leading $a$-block. During this phase, the master
component queries one of the non-master components in every time step.
When all components have read the separating symbol $\dollar$, they enter 
the initial state of the corresponding component of $A$. 
Subsequently, $A$ is simulated, thus testing whether the remaining input belongs to $L(A)$.
So, on input $a^n \dollar w$ with $n \ge 1$ and $w \in L(A)$, $A'$ performs at
least~$n$ communications. In particular, for $n\geq |w|$ we obtain words that
show that $A'$ is not communication bounded by any function $f \in o(n)$,
unless $L(A)$ is empty. 
So, $A'$ is a $\fdrcpcfa{f}(k)$ if and only if $L(A)=\emptyset$.
 
Since in~\cite{bordihn:2011:uhrpcfa} it has been shown that 
emptiness is not semi-decidable for $\drcpcfa$ with at least three components,
the theorem follows.
\end{proof}

Mitrana considers in~\cite{mitrana:2000:odcipcfas} the \emph{degree of
communication} of parallel communicating
finite automata systems. The degree of communication of an accepting computation is defined as the
number of queries posed. The degree of communication $Comm(x)$ of a 
nondeterministic $\pcfa$ $A$ on input $x$ is defined as the minimal 
number of queries posed in accepting computations on $x$.
The degree of communication $Comm(A)$ of a $\pcfa$ $A$ is then defined 
as $\sup \{\, Comm(x) \mid x \in L(A) \,\}$. 
Here we have the second type of problems we are dealing with.
Mitrana raised the question whether the degree of communication $Comm(A)$
is computable for a given nondeterministic $\pcfa(k)$ $A$. Since $Comm(A)$ is either
finite or infinite, in our terms the question is to decide
whether or not $A$ is communication bounded by some function $f\in O(1)$
and, if it is, to compute the precise constant. 
The next theorem solves the problem.

\begin{theorem}\label{theo:undec:comm1}
Let $k\geq 3$ be an integer. Then the degree of communication $Comm(A)$ is not 
computable for $\drcpcfa(k)$. 
\end{theorem}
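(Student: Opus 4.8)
The plan is to reduce the non-computability of $Comm(A)$ to the non-semi-decidability of emptiness for $\drcpcfa(k)$ with $k\geq 3$, using essentially the construction already developed in the proof of Theorem~\ref{thm:undec:comm}. Given a $\drcpcfa(k)$ $A$ over $\Sigma$, form $A'$ accepting $a^*\dollar L(A)$ exactly as there: all components march synchronously over the leading $a$-block with the master querying a fixed non-master component once per step, then, upon reading $\dollar$, every component jumps to the initial state of the corresponding component of $A$ and the simulation of $A$ proceeds. The key quantitative observation is that on input $a^n\dollar w$ with $w\in L(A)$ the computation of $A'$ uses at least $n$ queries, so if $L(A)\neq\emptyset$ then $Comm(A')=\infty$, whereas if $L(A)=\emptyset$ then $L(A')=\emptyset$ and hence $Comm(A')=0$.

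First I would spell out that $Comm(A')\in\{0,\infty\}$ and that this value determines membership of $L(A)$ in the class of empty languages: $Comm(A')=0$ iff $L(A)=\emptyset$. Then I would argue by contradiction: suppose there were an algorithm computing $Comm(B)$ for every $\drcpcfa(k)$ $B$. Running it on $B=A'$ and testing whether the output is $0$ would decide emptiness of $L(A)$. Since emptiness is not even semi-decidable for $\drcpcfa$ with at least three components (shown in~\cite{bordihn:2011:uhrpcfa}, and already invoked in the proof of Theorem~\ref{thm:undec:comm}), no such algorithm exists, which yields the theorem. One should note that the definition of $Comm$ in~\cite{mitrana:2000:odcipcfas} is phrased for nondeterministic systems, but a deterministic $\drcpcfa(k)$ is a special case, so the reduction is legitimate; alternatively one simply takes over the definition for the deterministic subclass verbatim as done in the excerpt.

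The only point requiring a little care — and the closest thing to an obstacle — is verifying that the $a$-counting phase of $A'$ really forces $n$ genuine queries that are counted by the communication measure, i.e. that the master enters a query state at $n$ distinct time steps, and that the construction does not accidentally halt or loop before $\dollar$ is consumed. This is handled exactly as in Theorem~\ref{thm:undec:comm}: the master's transition on $a$ leads into the query state $q_j$ for some fixed $j$, the returning communication resets $A_j$ to $s_{0,j}$, and the queried component is designed to return after one step to a state from which it can be re-queried, so each of the $n$ symbols $a$ contributes one query to the total count. Since this is the same device already used and justified for Theorem~\ref{thm:undec:comm}, I would simply invoke that construction rather than redo it, and the proof reduces to the short chain of implications above.

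\begin{proof}
Let $A$ be a $\drcpcfa(k)$ with $k\geq 3$. Construct the $\drcpcfa(k)$ $A'$ accepting $a^*\dollar L(A)$ exactly as in the proof of Theorem~\ref{thm:undec:comm}: all components traverse the leading $a$-block synchronously while the master queries a fixed non-master component in each step, and on reading $\dollar$ every component switches to the initial state of the corresponding component of $A$, after which $A$ is simulated on the remaining input. As observed there, on input $a^n\dollar w$ with $w\in L(A)$ the computation of $A'$ performs at least $n$ queries. Hence, if $L(A)\neq\emptyset$, then $Comm(A')=\sup\{\,Comm(x)\mid x\in L(A')\,\}=\infty$, since for every $n$ there is an accepted word requiring at least $n$ communications; and if $L(A)=\emptyset$, then $L(A')=\emptyset$ and $Comm(A')=0$. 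Thus $Comm(A')=0$ if and only if $L(A)=\emptyset$.

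Now assume for contradiction that $Comm(B)$ were computable for every $\drcpcfa(k)$ $B$. Then, given $A$, we could compute $Comm(A')$ and check whether it equals $0$, thereby deciding whether $L(A)=\emptyset$. In particular, emptiness would be semi-decidable for $\drcpcfa(k)$ with $k\geq 3$, contradicting the result of~\cite{bordihn:2011:uhrpcfa} that emptiness is not semi-decidable for $\drcpcfa$ with at least three components. Hence $Comm(A)$ is not computable for $\drcpcfa(k)$ with $k\geq 3$.
\end{proof}
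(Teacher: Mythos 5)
Your proposal is correct and follows essentially the same route as the paper: the same construction of $A'$ accepting $a^*\dollar L(A)$ from Theorem~\ref{thm:undec:comm}, the same equivalence $Comm(A')=0$ iff $L(A)=\emptyset$, and the same appeal to the non-semi-decidability of emptiness for $\drcpcfa(k)$, $k\geq 3$. The only difference is cosmetic: you note $Comm(A')=\infty$ when $L(A)\neq\emptyset$, whereas the paper only needs (and states) $Comm(A')>0$.
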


\begin{proof}
For a given $\drcpcfa(k)$ $A$ and new input symbols $a$ and $\dollar$, we
construct a $\drcpcfa(k)$ $A'$ accepting the language $a^* \dollar L(A)$
as in the proof of Theorem~\ref{thm:undec:comm}.

Now, we claim that $Comm(A')=0$ if and only if $L(A)=\emptyset$.
If $L(A)$ is empty, then $A'$ accepts the empty set and, thus, $Comm(A')=0$. 
On the other hand, if $L(A)$ is not empty, then $Comm(A') > 0$ by construction 
of $A'$. Since emptiness is not semi-decidable for $\drcpcfa(k)$ with 
$k\geq 3$~\cite{bordihn:2011:uhrpcfa}, the theorem follows.
\end{proof} 

Now we turn to the last type of problems we are dealing with
in this section. The question is now whether the degree of 
communication is computable for the \emph{language} accepted
by a given nondeterministic $\pcfa(k)$ $A$.
In~\cite{mitrana:2000:odcipcfas} the degree of
communication $Comm_{X}(L)$ of a language $L$
is defined as $\inf \{\, Comm(A) \mid A \text{ is device of type } X \text{ and } L(A)=L \,\}$. 
Mitrana showed in~\cite{mitrana:2000:odcipcfas} that 
$Comm_{CPCFA}(L(A))$ for some nondeterministic $\cpcfa$ $A$ is not
computable. 
He leaves as an open question whether the degree is computable for $\rcpcfa$.
Here we are going to show that the degree is not even computable for
deterministic $\rcpcfa$.

\begin{lemma}\label{lem:undec:comm2}
Let $k\geq 3$ be an integer. Then the degree of communication $Comm_{DRCPCFA(k)}(L(A))$ is not 
computable. 
\end{lemma}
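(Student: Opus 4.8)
The plan is to mimic the reduction used in the proofs of Theorems~\ref{thm:undec:comm} and~\ref{theo:undec:comm1}, but now arguing about the communication degree of the \emph{language} rather than of a particular system. Given a $\drcpcfa(k)$ $A$ over an alphabet $\Sigma$, pick two fresh symbols $a,\dollar\notin\Sigma$ and build the system $A'$ accepting $a^*\dollar L(A)$ exactly as before: all $k$ components move synchronously over the leading $a$-block, the master querying a fixed non-master component at every step, and upon reading $\dollar$ each component jumps into the corresponding initial state of $A$ and the computation proceeds as a simulation of $A$. The claim I would establish is that $Comm_{DRCPCFA(k)}(L(A'))=0$ if and only if $L(A)=\emptyset$; since emptiness is not semi-decidable for $\drcpcfa(k)$ with $k\ge 3$ by~\cite{bordihn:2011:uhrpcfa}, non-computability of $Comm_{DRCPCFA(k)}$ follows.

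The easy direction is as in Theorem~\ref{theo:undec:comm1}: if $L(A)=\emptyset$ then $L(A')=\emptyset$, which is trivially recognized with no communication, so the degree of the language is $0$. The substantive direction is to show that if $L(A)\neq\emptyset$ then \emph{every} $\drcpcfa(k)$ recognizing $a^*\dollar L(A)$ has strictly positive — indeed unbounded — communication degree, so that the infimum over all equivalent devices is still positive (and in fact the degree of the language is infinite). Fix any $w\in L(A)$, so that $a^n\dollar w\in L(A')$ for every $n\ge 0$. Suppose $B$ is a $\drcpcfa(k)$ with $L(B)=a^*\dollar L(A)$ and $Comm(B)=c<\infty$. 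Then on every input $a^n\dollar w$ there is an accepting computation using at most $c$ communication steps. The point is to derive a contradiction by a pumping-style argument on the leading $a$-block: between consecutive communication steps (and before the first and after the last), all components behave like plain one-way DFAs on their respective tape portions, so for $n$ large the configuration that the system is in when it finally reads past the $\dollar$ can be reached for infinitely many values of $n$ with the components having consumed a bounded-times-$n$-bounded pattern of the $a$'s; using the standard argument that a bounded number of communications among finitely many DFA-like components yields only a semilinear / regular-like dependence on the length of the $a$-prefix, one shows that $B$ would then also accept some $a^n\dollar u$ with $u\notin L(A)$, or fail to accept some $a^n\dollar w$ — contradicting $L(B)=a^*\dollar L(A)$ together with $L(A)\neq\emptyset$. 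Hence no finite communication bound is possible and $Comm_{DRCPCFA(k)}(L(A'))=\infty>0$.

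I expect the main obstacle to be making the pumping argument on the $a$-prefix fully rigorous in the presence of $\lambda$-moves and the returning communication mechanism: one must argue that, with only $c$ communications total, the "macro-behaviour" of the system as $n$ grows is eventually periodic in $n$, so that the set $\{\,n : a^n\dollar w\in L(B)\,\}$ together with the companion computations is ultimately periodic and cannot single out exactly the words whose $\Sigma$-suffix lies in $L(A)$. A cleaner route, which I would prefer, avoids this entirely: instead of arguing directly, observe that the original proof idea of Theorem~\ref{thm:undec:comm} already shows $A'$ itself is not communication bounded by any $f\in o(n)$ when $L(A)\neq\emptyset$, because $a^n\dollar w$ forces at least $n$ queries; combined with Lemma~\ref{lem:linear:time} (acceptance in linear time) this forces $Comm(A')=\infty$, and then one shows the stronger statement that \emph{no} equivalent device does better, by a communication-complexity lower bound: a device with communication degree $c$ on inputs $a^n\dollar w$ transmits only $O(c)$ messages of constant size, so only $O(1)$ bits of information flow between components, which is insufficient to coordinate recognition of $L(A)$ on the suffix once $L(A)$ is nontrivial and $n$ is unbounded unless the suffix-language is essentially trivial — the contradiction then comes exactly from the fact that deciding $w\in L(A)$ for an arbitrary $\drcpcfa$ $A$ cannot be done with $O(1)$ total communication appended to an arbitrarily long $a$-prefix. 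Either way, the crux is the lower-bound claim $Comm_{DRCPCFA(k)}(a^*\dollar L(A))>0$ whenever $L(A)\neq\emptyset$; once that is in hand, the reduction from non-semi-decidability of emptiness closes the proof immediately.
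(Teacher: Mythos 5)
Your reduction does not work for the \emph{language} measure, and the gap is precisely at what you call the crux. For $L(A')=a^*\dollar L(A)$, the degree $Comm_{DRCPCFA(k)}(L(A'))$ is an infimum over \emph{all} equivalent devices, so the communications you build into $A'$ on the $a$-prefix are simply discarded by a better device: each component can read the leading $a$-block and the $\dollar$ with no queries at all and only then start simulating $A$, so $Comm_{DRCPCFA(k)}(a^*\dollar L(A))\le Comm(A)$. In particular, if $L(A)$ is nonempty but accepted by a system that never communicates (e.g.\ $L(A)=\Sigma^*$), the degree of $a^*\dollar L(A)$ is $0$, refuting your claim that it is positive (let alone infinite) whenever $L(A)\neq\emptyset$. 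No pumping or ``$O(1)$ bits of information'' argument can close this, because the suffix language may genuinely require no communication; the construction that inflates $Comm(A')$ was adequate for Theorems~\ref{thm:undec:comm} and~\ref{theo:undec:comm1}, which concern a \emph{given} system, but not here.

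The paper's proof avoids this by prepending a language that \emph{intrinsically} requires communication: it builds $A'$ accepting $\{\,w_1\cdots w_m b w_1\cdots w_m\,\}\dollar_1\dollar_2 L(A)$, i.e.\ the copy language of Lemma~\ref{lem:wbw} followed by $L(A)$. Then $Comm(L(A'))=0$ iff $L(A)=\emptyset$: if some $x\in L(A)$ exists and a device $B$ accepted $L(A')$ with zero communication, one adds one extra component that pins the suffix to the fixed word $x$ (adding no communication), obtaining a device for $\{\,wbw\,\}\dollar_1\dollar_2 x$ with degree $0$, which contradicts the Kolmogorov-complexity lower bound of Lemma~\ref{lem:wbw}. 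The conclusion then follows, as in your plan, from non-semi-decidability of emptiness for $\drcpcfa(k)$, $k\ge 3$. So the reduction skeleton you propose is right, but the witness language must carry a provable communication lower bound on the language level, which $a^*\dollar L(A)$ does not.
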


\begin{proof}
For a given $\drcpcfa(k)$ $A$ over alphabet $\Sigma$ and new input 
symbols $b,0,1,\dollar_1,\dollar_2$, we
construct a $\drcpcfa(k)$ $A'$ accepting the language 
\[
\{\, w_1w_2 \cdots w_m b w_1w_2 \cdots w_m \mid m \ge 1, w_i \in \{0,1\}, 1
\le i \le m\,\} \dollar_1 \dollar_2 L(A).
\]
We present the construction for $k=3$. The generalization
to larger $k$ is straightforward. 

The idea of the construction is that in a first phase master component $A_1$ and 
a non-master component~$A_2$ check the correctness of the prefix 
$w_1w_2 \cdots w_m b w_1w_2 \cdots w_m$. This is done as in the construction
of Lemma~\ref{lem:wbw}. Component $A_3$ checks the correct format of the
input up to the separating symbol $\dollar_1$ and waits on symbol
$\dollar_2$ until it is queried.
At the end of this phase, the master is on the $\dollar_1$ and component $A_2$
stays on the symbol $b$.

In a second phase, the master component stays on $\dollar_1$ and repeatedly queries
component $A_2$ until this one has read $\dollar_1$ and, thus, stays on
$\dollar_2$. Now the master reads $\dollar_1$ and queries component $A_2$.
After being reset to its initial state, component $A_2$ reads $\dollar_2$
and performs one $\lambda$-step. Then it changes to the initial state
of $A_2$ in $A$. During this $\lambda$-step, the master component
reads $\dollar_2$ and queries component $A_3$. Then it changes
to the initial state of the master of $A$. Finally, after being reset to
its initial state, component $A_3$ reads $\dollar_2$ and changes
into the initial state of $A_3$ in $A$.

Now, all components are in their initial states on the first symbol 
of the input of $A$ and in a third phase~$A$ is simulated. 
We claim that $Comm(L(A'))=0$ if and only if $L(A)=\emptyset$. 
If $L(A)$ is empty, then $A'$ accepts the empty set and
$Comm(L(A'))=Comm(\emptyset)=0$.
If $L(A)$ is not empty, we fix some $x \in L(A)$. 
Assume contrarily that $Comm(L(A'))=0$. Then there exists
a $\drcpcfa(k)$ $B$ accepting $L(A')$ such that
$Comm(B)=0$. From $B$ a $\drcpcfa(k+1)$ $B'$ is constructed by
providing an additional component which checks whether the suffix
is precisely $x$, and halts non-accepting if an error is found.
So, $B'$ accepts the language  
\[
\{\, w_1w_2 \cdots w_m b w_1w_2 \cdots w_m \mid m \ge 1, w_i \in \{0,1\}, 1
\le i \le m\,\} \dollar_1 \dollar_2 x
\] 
and we still have $Comm(B')=0$. 
Similar as in the proof of Lemma~\ref{lem:wbw}, it follows by 
an incompressibility argument that this conclusion
leads to a contradiction.

Since emptiness is not semi-decidable for $\drcpcfa(k)$ with 
$k\geq 3$~\cite{bordihn:2011:uhrpcfa}, the degree of communication $Comm_{DRCPCFA(k)}(L(A))$ is not 
computable. 
\end{proof}

\section{An Infinite Hierarchy}\label{sec:hierarchy}

In this section, we are going to show that there is an infinite strict hierarchy of language
classes in between $\lfam(\fdrcpcfa{O(\log(n))}(k))$ and
$\lfam(\fdrcpcfa{O(\sqrt{n})}(k))$, for any $k\geq 4$.
To this end, we consider functions $f:\mathbb{N}\to\mathbb{N}$
that are time-computable by one-way cellular automata. 
That means, given any unary input of length $n\geq 1$, say $a^n$, the rightmost cell 
has to enter an accepting state exactly after~$f(n)$ time
steps and never before. 
Time-computable functions in $\oca$ have been 
studied in~\cite{Buchholz:1998:tcfoca:art}, where it is shown that,
for any $r\geq 1$, there exists an $\oca$-time-computable function
$f\in \Theta(n^r)$. We will use this result in the sequel.
So, let $M_r$ be an $\oca$ that time-computes $f\in \Theta(n^r)$,
for $r\geq 1$. We will use 
\begin{multline*}
L_r =\{\, \dollar_1x_1x_2\cdots x_\ell \dollar_2 w'_1w'_2 \cdots w'_m w_{m+1}\cdots  w_\ell
\dollar_3 w'_1w'_2 \cdots w_m' w_{m+1}\cdots  w_\ell \dollar_4 
a^{2^{0}}bb a^{2^{1}} bb \cdots a^{2^{m-1}} bb \kand \mid\\ 
m\geq 1, x_1x_2\cdots x_\ell \text{ is the valid computation of } M_r \text{
  on input }a^m,\\ \, w_i' \in \{0',1'\}, 1 \le i \le m,\,  w_i \in \{0,1\}, m+1 \le i \le \ell \,\}
\end{multline*}
as witness languages for the infinite hierarchy.

\begin{lemma}\label{lem:inf:hier1}
Let $r\geq 1$ be an integer. Then language 
$L_r$ belongs to $\lfam(\fdrcpcfa{O(\log(n)^{r+2})}(4))$.
\end{lemma}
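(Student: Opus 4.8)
The plan is to reuse, in three successive phases, the constructions used for Lemma~\ref{lem:expovalc}, Lemma~\ref{lem:wbw} and Example~\ref{exa:expo}, letting the exponentially growing $a$-blocks after $\dollar_4$ serve as a clock that keeps the number of communications logarithmic in the input length.

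First I would record the lengths involved. Since $M_r$ time-computes a function $g\in\Theta(n^r)$, it halts on input $a^m$ after $g(m)\in\Theta(m^r)$ steps, so by Lemma~\ref{lem:lengthvalc} the valid computation $x_1x_2\cdots x_\ell$ has length $\ell=m+(m+1)\,m\,g(m)\in\Theta(m^{r+2})$. The $a$-blocks dominate the whole input, hence $|w|=3\ell+2^{m}+O(m)\in\Theta(2^{m})$, so that $m\in\Theta(\log|w|)$ and $\ell\in\Theta\bigl((\log|w|)^{r+2}\bigr)$. It therefore suffices to construct a $\drcpcfa(4)$ for $L_r$ whose accepting computations use only $O(\ell)$ communications.

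The construction uses the master $A_1$ and three further components $A_2,A_3,A_4$, in three phases that re-use the components. In \emph{Phase~1} the master together with $A_2$ and $A_3$ runs the valid-computation test of~\cite{bordihn:2011:uhrpcfa} underlying Lemma~\ref{lem:expovalc}, treating $\dollar_2$ as a local right endmarker; this costs $O(\ell)$ communications and fixes the size $m$ of the input $a^m$ recorded in the initial configuration of $x_1\cdots x_\ell$, while $A_4$ merely steps onto and then waits on the first symbol of $x_1\cdots x_\ell$. In \emph{Phase~2} the master and one of $A_2,A_3$ perform the copy test of Lemma~\ref{lem:wbw} on the two blocks bounded by $\dollar_2,\dollar_3,\dollar_4$ (checking along the way that the block after $\dollar_2$ lies in $\{0',1'\}^{+}\{0,1\}^{*}$, which fixes the length of its primed prefix), using $O(\ell)$ communications; the other of $A_2,A_3$ waits on the first primed symbol after $\dollar_2$, and the copy-checking buffer is arranged to finish next to $\dollar_4$. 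In \emph{Phase~3} the master verifies the exponential $a$-blocks exactly as in Example~\ref{exa:expo}, with the component resting next to $\dollar_4$ as its full-speed partner and querying it once per block, i.e. using $O(m)$ communications; the very same block-by-block queries are used to advance the two waiting components one symbol at a time through the initial configuration of $x_1\cdots x_\ell$ and through the primed prefix after $\dollar_2$, which forces the number of $a$-blocks, the length of the primed prefix, and the size of the $M_r$-input to coincide. The word is accepted once all tests succeed, and the total communication is $O(\ell)+O(\ell)+O(m)=O(\ell)=O\bigl((\log n)^{r+2}\bigr)$; degree~$4$ is exactly what Phase~3 needs: master, full-speed partner, and two counters.

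The only real issue is timing: in a \drcpcfa{} a query succeeds only if the queried component is in the expected state at that very step, so the waiting counters and the full-speed partner must arrive at their positions on schedule. This works out because the three parts preceding $\dollar_4$ have combined length $\Theta(\ell)$ whereas Phases~1 and~2 already take $\Theta(\ell)$ time steps, giving the auxiliary components enough time to travel into place (after inserting, if necessary, a bounded number of $\lambda$-moves or synchronization queries that change neither the communication count nor the accepted language). Matching up all the constant offsets is tedious but entirely parallel to Example~\ref{exa:expo} and the constructions in Lemmas~\ref{lem:expoww}, \ref{lem:wbw} and~\ref{lem:expovalc}.
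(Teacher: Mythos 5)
Your overall architecture (reuse the valid-computation test, a copy test, and the exponential block check, with the suffix serving as a logarithmic clock) matches the paper, but your construction misses one of the defining constraints of $L_r$ and therefore accepts a strictly larger language. In $L_r$ the two blocks between $\dollar_2$, $\dollar_3$ and $\dollar_4$ must have length exactly $\ell$, the length of the valid computation $x_1\cdots x_\ell$. Your Phase~2 only compares these two blocks with each other (plus the format check $\{0',1'\}^{+}\{0,1\}^{*}$), and your Phase~3 ties together only the parameter $m$: number of $a$-blocks, length of the primed prefix, and size of the $M_r$-input. Nowhere do you compare the length of the $w$-blocks with $\ell$. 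Consequently an input with a correct valid computation on $a^m$, a correct suffix with $m$ blocks, and $w$-blocks of the form, say, $0'^{\,m}0^{5\ell}$ would pass all your tests but is not in $L_r$. This is exactly why the paper inserts a separate phase in which the master walks the block between $\dollar_2$ and $\dollar_3$ while querying, at every step, the component stationed on $x_1\cdots x_\ell$; that phase simultaneously verifies the length equality with $\ell$ and that the primed symbols match the primed initial configuration of the valid computation, at a cost of only $O(\ell)$ communications, so adding it would repair your argument without affecting the $O\bigl((\log n)^{r+2}\bigr)$ bound.

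A secondary, more technical point: in your Phase~3 the master must query three components per $a$-block (the full-speed partner and the two counters), while the separators provide only two symbols $b$; the extra query needs a $\lambda$-step, which shifts the half-speed/full-speed timing that the block-length comparison relies on, so the compensation has to be built into the partner's behaviour explicitly rather than dismissed as constant-offset bookkeeping. The paper avoids this by doing the $m$-versus-primed-prefix comparison already in its second phase, so that its suffix phase needs only the two queries per block that the $bb$ separators naturally accommodate.
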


\begin{proof}
An $\fdrcpcfa{O(\log(n)^{r+2})}(4))$ $A$ accepting $L_r$ works in five phases.

As mentioned before, in~\cite{bordihn:2011:uhrpcfa} an $\fdrcpcfa{O(n)}(3)$ is constructed
that accepts $\valc(M)$, where the master component $A_1$ and component 
$A_2$ are used to verify the subconfigurations, and component $A_3$ is 
used to check the correct format of the input.
In the first phase, $A$ simulates this behavior where $\dollar_2$
plays the role of the endmarker.
When $x_1x_2 \cdots x_\ell \in \valc(M)$ has been tested, the master $A_1$ and 
component~$A_2$ are both located on the symbol after $\dollar_2$, that is, on $w_1'$.
Additionally, component $A_4$ initially reads $\dollar_1$ and waits on $x_1$
to be queried. The total number of communications in this phase is of order $O(\ell)$.

In the second phase, it is verified that there are as many symbols in between
$\dollar_1$ and $\dollar_2$ as in between~$\dollar_2$ and $\dollar_3$, that
is, the length $\ell$ is matched. Furthermore, it is checked whether there
are exactly~$m$ symbols of the second infix primed.
Since $x_1x_2\cdots x_\ell$ describes an $\oca$ computation on some 
unary input $a^m$, the initial configuration of the $\oca$ is of the form
$\#(a')^m$. Therefore, the valid computation 
begins with $[\texttt{\#},a'][a',a']^{m-1}[a',\texttt{\#}]$ followed by
symbols not containing primed versions of other symbols.
As in the constructions before, the master $A_1$ moves
to the right while querying component $A_4$ in every step. Whenever
component $A_4$ is reset to its initial state, it reads the next 
input symbol, remembers it, and waits. 
In this way, component $A_4$ is tracked over the valid computations.
Moreover, the master $A_1$ receives information about the symbols read by
$A_4$ and can check the number of primed symbols to be~$m$.
The phase ends successfully when $A_1$ has read $\dollar_3$ and 
receives the information
that $A_4$ has read $\dollar_2$ in this moment, that is, both infixes
have the same length $\ell$.
This phase takes $O(\ell)$ communications. At its end, the master 
$A_1$ is located on the symbol 
after $\dollar_3$ and components $A_2$ and $A_4$ are both located on 
the symbol after $\dollar_2$.

The third phase is used to compare the word in between $\dollar_2$ and
$\dollar_3$ with the word in between $\dollar_3$ and $\dollar_4$.
Similar as in the phase before, to this end, the master $A_1$ moves
to the right while querying component $A_2$ in every step. Whenever
component $A_2$ is reset to its initial state, it reads the next 
input symbol, remembers it, and waits. So, $A_1$ can check whether
the currently read symbols are identical. The phase ends successfully
when $A_1$ has read $\dollar_4$ and receives the information
that $A_2$ has read $\dollar_3$ in this moment. Now,
the master $A_1$ is located on the symbol 
after $\dollar_4$, $A_2$ is located on the symbol 
after $\dollar_3$, and~$A_4$ still on the symbol 
after $\dollar_2$. 
The total number of communications in this phase is of order $O(\ell)$.

The fourth phase is used to track component $A_2$ to the position of $A_1$.
So, the master $A_1$ loops on its position while it queries $A_2$ in every
step. In this way, $A_2$ moves to the right. The phase ends when 
$A_1$ receives the information that $A_2$ has read $\dollar_4$.
At this time step the master $A_1$ and component $A_2$ are located on the symbol 
after $\dollar_4$ and $A_4$ still on the symbol 
after $\dollar_2$. During this phase $O(\ell)$ communications take place. 

The fifth and final phase is to check the suffix.
The master knows that this phase starts and changes into some appropriate 
state in a $\lambda$-step. The situation is similar for component $A_2$.
It is in its initial state on a symbol $a$ for the first time. So, both
synchronously start the phase. Basically, here we can use again the
construction of the proof of Lemma~\ref{lem:expovalc}. That is, the master component
and component $A_2$ check that the lengths of $a$-blocks are doubling. 
Communication takes place at both symbols $b$. 
Reading the first $b$, component $A_4$ is queried and forced to proceed
one input symbol in order to check the correct number $m$ of $a$-blocks. 
Since component $A_4$ is tracked over an infix whose first $m$ symbols
are primed this can be done almost as before.
Reading the second $b$, the master queries component $A_2$
to ensure that the $a$-blocks ended correctly.
The total number of communications in this phase is of order $O(m)$.
This concludes the construction of $A$.

The length $\ell$ of the valid computation of $M_r$ on input $a^m$
is of order $\Theta(m^2\cdot m^r)=\Theta(m^{r+2})$ by Lemma~\ref{lem:lengthvalc}.
The length of an input is $n=3\ell+2^m-1+2m+5 \in \Theta(2^m)$. 
The total number of communications is of order 
$O(\ell)+O(\ell)+O(\ell)+O(\ell)+O(m)= O(m^{r+2})$.
So, the number of communications is of order $O(\log(n)^{r+2})$.
\end{proof}

\begin{lemma}\label{lem:inf:hier2}
Let $r\geq 1$ be an integer. Then language 
$L_r$ does not belong to $\lfam(\fdrcpcfa{O(\log(n)^{r})}(4))$.
\end{lemma}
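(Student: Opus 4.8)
The plan is to run the incompressibility argument of Lemma~\ref{lem:wbw}, exploiting that the only freely choosable part of a word in $L_r$ is the block $u=w'_1\cdots w'_m w_{m+1}\cdots w_\ell$: the prefix $x_1\cdots x_\ell$ is \emph{the} (hence unique) valid computation of the deterministic automaton $M_r$ on $a^m$, and the suffix $a^{2^{0}}bb\cdots a^{2^{m-1}}bb\kand$ is determined by $m$ as well. So, assume towards a contradiction that some $\fdrcpcfa{f}(k)$ $A$ with $f\in O(\log(n)^{r})$ accepts $L_r$. Fix a large $m$, let $x_1\cdots x_\ell$ be the valid computation of $M_r$ on $a^m$, so that $\ell\in\Theta(m^{r+2})$ by Lemma~\ref{lem:lengthvalc} (recall $M_r$ accepts $a^m$ after $\Theta(m^{r})$ steps), and --- identifying $u$ with a binary string of length $\ell$ via the forced primed/unprimed pattern --- pick $u$ with $C(u)\ge\ell$. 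Let $z=\dollar_1 x_1\cdots x_\ell\dollar_2 u\dollar_3 u\dollar_4 a^{2^{0}}bb\cdots a^{2^{m-1}}bb\kand\in L_r$; its length is $n\in\Theta(2^m)$, so $\log(n)\in\Theta(m)$, $f(n)\in O(m^{r})$, and by Lemma~\ref{lem:linear:time} the accepting computation $K_0\vdash\cdots\vdash K_{acc}$ on $z$ halts within $O(n)$ steps.

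As in Lemma~\ref{lem:wbw}, I would take a snapshot $(t_i,s_1^{(i)},p_1^{(i)},\ldots,s_k^{(i)},p_k^{(i)})$ at every step at which the master issues a query and at every step at which some component first reaches one of the markers $\dollar_1,\dollar_2,\dollar_3,\dollar_4$, and also the snapshot of $K_{acc}$. Since each component passes each marker at most once, the number of recorded snapshots is at most $f(n)+O(k)=O(m^{r})$; and, using Lemma~\ref{lem:linear:time}, a time step and a head position each fit in $O(\log(n))=O(m)$ bits while a state fits in $O(1)$ bits, so one snapshot costs $O(m)$ bits and the whole list $\Lambda$ costs $O(m^{r+1})$ bits.

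Then I would exhibit a program $P$ reconstructing $u$ from $\Lambda$, the snapshot of $K_{acc}$, a description of $A$, and the integer $m$: from $m$ it recomputes $x_1\cdots x_\ell$ and the $a$-block suffix, hence the input template with two length-$\ell$ holes, then simulates $A$ on $z_v=\dollar_1 x_1\cdots x_\ell\dollar_2 v\dollar_3 v\dollar_4 a^{2^{0}}bb\cdots\kand$ for every $v\in\{0,1\}^\ell$ and outputs the $v$ whose computation produces exactly the recorded snapshots and the recorded accepting configuration. The choice $v=u$ passes. The main obstacle is to prove uniqueness by a cut-and-paste argument in the style of Lemma~\ref{lem:wbw}: if some $v\neq u$ also passed, consider the hybrid $\tilde z=\dollar_1 x_1\cdots x_\ell\dollar_2 u\dollar_3 v\dollar_4 a^{2^{0}}bb\cdots\kand$. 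Since the system is centralized, every non-master component moves solely according to the input it reads, and the master interacts with the others only at the recorded query steps, where the states it receives are exactly those in the snapshots. Now $\tilde z$ agrees with $z$ on the prefix up to and including $\dollar_3$ and with $z_v$ on the suffix from $\dollar_3$ on; so on $\tilde z$ each component follows its $z$-trajectory until it first reads past $\dollar_3$ --- a moment and state pinned down by the recorded $\dollar_3$-entry, hence the same as on $z$ and on $z_v$ --- and follows its $z_v$-trajectory thereafter. As $z$ and $z_v$ induce identical snapshots, the master's queries return the recorded states in either phase, and the computation on $\tilde z$ reaches $K_{acc}$ and accepts. But the two $u$-blocks of $\tilde z$ are $u$ and $v$, which differ, so $\tilde z\notin L_r=L(A)$, a contradiction; hence $P$ reconstructs $u$.

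Finally, since $P$ and $A$ have constant size, $m$ costs $O(\log(m))$ bits, and $\Lambda$ together with the snapshot of $K_{acc}$ costs $O(m^{r+1})$ bits, we obtain $C(u)\in O(m^{r+1})$. As $\ell\in\Theta(m^{r+2})$, this contradicts $C(u)\ge\ell$ for $m$ large enough. The argument is independent of the degree, so $L_r\notin\lfam(\fdrcpcfa{O(\log(n)^{r})}(k))$ for every $k\ge1$, in particular for $k=4$.
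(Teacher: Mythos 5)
Your proposal is correct and follows essentially the same route as the paper: an incompressibility argument on the duplicated block, with snapshots at the $O(\log(n)^r)=O(m^r)$ communication steps (plus marker crossings), each of size $O(m)$ bits, yielding a description of size $O(m^{r+1})$ that contradicts $C(u)\ge\ell\in\Theta(m^{r+2})$. The paper merely sketches this by reference to Lemmas~\ref{lem:polyww} and~\ref{lem:wbw}; your cut-and-paste uniqueness argument via the hybrid word is exactly the missing detail the paper delegates to Lemma~\ref{lem:wbw}.
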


\begin{proof}
The proof is along the line of the proof of Lemma~\ref{lem:polyww}.
By way of contradiction, we assume that $L_r$ is accepted by some
$\fdrcpcfa{O(\log(n)^{r})}(4)$. 

Let $z$ be a word in $L_r$ whose infix $x=x_1x_2\cdots x_\ell$
is the valid computation of $M_r$ on input $a^m$. 
Then $|z|$ is of order $\Theta(2^{m})$ and $\ell$ is of order $\Theta(m^{r+2})$.
We will use an incompressibility argument and choose a string
$w=w_1w_2\cdots w_\ell \in\{0,1\}^*$ so that the Kolmogorov
complexity is $C(w)\geq |w|=\ell\in \Theta(m^{r+2})$. Then the word
$z' = \dollar_1 x \dollar_2 w'_1w'_2 \cdots w'_m w_{m+1}\cdots  w_\ell
\dollar_3 w'_1w'_2 \cdots w_m' w_{m+1}\cdots  w_\ell \dollar_4 
a^{2^{0}}bb a^{2^{1}} bb \cdots a^{2^{m-1}} bb \kand$
belongs to~$L_r$ as well.

With the help of the accepting computation on $z'$ we write down 
a program that uniquely reconstructs~$w$.
The order of magnitude of the size of the program is given by the product of the size 
of one snapshot and the number of all snapshots. Since one snapshot 
can be described by $O(m)$ bits and the number of snapshots is bounded 
by $O(m^{r})$, we derive that $C(w)$ is of order $O(m^{r+1})$,
a contradiction.
\end{proof}

Combining Lemma~\ref{lem:inf:hier1} and Lemma~\ref{lem:inf:hier2}
the desired infinite hierarchy of the next theorem follows.

\begin{theorem}
Let $r\geq 1$ be an integer. Then
the class $\lfam(\fdrcpcfa{O(\log(n)^{r})}(4))$ is properly included
in the class $\lfam(\fdrcpcfa{O(\log(n)^{r+2})}(4))$.
\end{theorem}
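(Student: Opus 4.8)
The plan is to deduce the theorem directly from the witness family $L_r$ together with the two preceding lemmas, so almost all of the work has already been done. First I would dispose of the inclusion $\lfam(\fdrcpcfa{O(\log(n)^{r})}(4)) \subseteq \lfam(\fdrcpcfa{O(\log(n)^{r+2})}(4))$, which is immediate: a system that is communication bounded by some $f \in O(\log(n)^r)$ is a fortiori communication bounded by some $g \in O(\log(n)^{r+2})$, since $\log(n)^r \le \log(n)^{r+2}$ for $n$ large enough. No new construction is needed here, so the entire content of the statement lies in the properness of the inclusion.

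For strictness I would exhibit the concrete separating language $L_r$ defined above. By Lemma~\ref{lem:inf:hier1}, $L_r$ is accepted by some $\fdrcpcfa{O(\log(n)^{r+2})}(4)$, hence it belongs to the larger class. By Lemma~\ref{lem:inf:hier2}, $L_r$ is accepted by no $\fdrcpcfa{O(\log(n)^{r})}(4)$, hence it does not belong to the smaller class. Together these two facts show that the inclusion is strict, and the theorem follows. The remaining step is purely bookkeeping: one checks that the two communication bounds named in the lemmas are syntactically the ones named in the theorem and that the degree $4$ matches.

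The genuine difficulty sits inside the two lemmas, not in combining them. In Lemma~\ref{lem:inf:hier1} the main obstacle is orchestrating the five-phase computation of the four-component system so that each phase contributes only $O(\ell)$ (respectively $O(m)$ in the last phase) communications, where $\ell\in\Theta(m^{r+2})$ is the length of the valid computation of $M_r$ on $a^m$ while the input length is $n\in\Theta(2^m)$; this uses the $\oca$-time-computability of a function in $\Theta(n^r)$ from~\cite{Buchholz:1998:tcfoca:art} and re-uses the $\valc(M)$ construction of~\cite{bordihn:2011:uhrpcfa}. In Lemma~\ref{lem:inf:hier2} the main obstacle is the incompressibility argument: because acceptance is in linear time (Lemma~\ref{lem:linear:time}) and the padding forces $n\in\Theta(2^m)$, each snapshot can be coded with $O(m)$ bits, and since there are only $O(m^r)$ snapshots, the embedded binary string $w$ of length $\ell\in\Theta(m^{r+2})$ would have Kolmogorov complexity $O(m^{r+1})=o(\ell)$, contradicting the existence of incompressible strings. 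Once both lemmas are available, the theorem requires no further estimates.
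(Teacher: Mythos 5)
Your proposal is correct and matches the paper's own argument: the paper proves the theorem exactly by combining Lemma~\ref{lem:inf:hier1} (membership of $L_r$ in the larger class) with Lemma~\ref{lem:inf:hier2} (non-membership in the smaller class), the trivial containment being implicit. Your additional remark that the real work lies inside the two lemmas is accurate; no further steps are needed.
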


Since the proofs of Lemma~\ref{lem:inf:hier1} and Lemma~\ref{lem:inf:hier2} do
not rely on a specific number of components as long as at least four components are provided,
the hierarchy follows for any number of components $k\geq 4$.

\begin{corollary}
Let $k\geq 4$ and $r\geq 1$ be two integers. Then
the class $\lfam(\fdrcpcfa{O(\log(n)^{r})}(k))$ is properly included
in the class $\lfam(\fdrcpcfa{O(\log(n)^{r+2})}(k))$.
\end{corollary}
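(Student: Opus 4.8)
The plan is to deduce the corollary directly from the theorem together with the remark that immediately precedes it. Concretely, the theorem asserts, for every integer $r\geq 1$, the proper inclusion
\[
\lfam(\fdrcpcfa{O(\log(n)^{r})}(4)) \subset \lfam(\fdrcpcfa{O(\log(n)^{r+2})}(4)),
\]
and the surrounding discussion points out that the constructions establishing this inclusion are insensitive to the exact number of components as long as there are at least four of them. So the first step is to re-examine the two ingredients of the theorem — Lemma~\ref{lem:inf:hier1} (membership of $L_r$ in $\lfam(\fdrcpcfa{O(\log(n)^{r+2})}(4))$) and Lemma~\ref{lem:inf:hier2} (non-membership of $L_r$ in $\lfam(\fdrcpcfa{O(\log(n)^{r})}(4))$) — and observe that each is really a statement about \emph{at least} four components, not exactly four.

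For the inclusion direction, I would note that the $\fdrcpcfa{O(\log(n)^{r+2})}(4)$ constructed in Lemma~\ref{lem:inf:hier1} can be padded up to any degree $k\geq 4$ by adjoining $k-4$ additional non-master components that do nothing except read the leading endmarker, loop with $\lambda$-moves, and are never queried; this changes neither the accepted language nor the communication count, so $L_r\in\lfam(\fdrcpcfa{O(\log(n)^{r+2})}(k))$ for every $k\geq 4$. More to the point, trivially every $\fdrcpcfa{g}(4)$ is in particular a $\fdrcpcfa{g}(k)$ after such padding, so the whole class $\lfam(\fdrcpcfa{O(\log(n)^{r+2})}(4))$ is contained in $\lfam(\fdrcpcfa{O(\log(n)^{r+2})}(k))$, and in any case $L_r$ witnesses the upper bound at degree $k$.

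For the separation direction, the Kolmogorov-complexity argument of Lemma~\ref{lem:inf:hier2} must be re-read to confirm it never uses the bound $k=4$: the snapshot list recorded there has size bounded by (number of communications) times (size of one snapshot), where a single snapshot records the time step, the $k$ component states, and the $k$ component head positions. With $k$ an arbitrary fixed constant, each snapshot is still describable in $O(\log|z|)=O(m)$ bits, the number of snapshots is still $O(m^{r})$, and the reconstruction program still recovers $w$ from $O(m^{r+1})$ bits, contradicting $C(w)\geq \ell\in\Theta(m^{r+2})$. Hence $L_r\notin\lfam(\fdrcpcfa{O(\log(n)^{r})}(k))$ for every $k\geq 4$, and combining the two directions gives the claimed proper inclusion
\[
\lfam(\fdrcpcfa{O(\log(n)^{r})}(k)) \subset \lfam(\fdrcpcfa{O(\log(n)^{r+2})}(k))
\]
for all $k\geq 4$ and $r\geq 1$.

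There is essentially no hard step here; this corollary is a routine strengthening obtained by tracking the constant $k$ through the two lemmas. The only thing requiring care — and the one place a careless proof could go wrong — is verifying that in the incompressibility argument the per-snapshot description size stays $O(m)$ uniformly in $k$ (it does, since $k$ is a fixed constant absorbed into the $O$), and that padding extra idle components into the construction of Lemma~\ref{lem:inf:hier1} genuinely adds zero communications (it does, since idle components neither query nor are queried). Given those two observations, the corollary is immediate.
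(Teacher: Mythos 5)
Your proposal is correct and follows essentially the same route as the paper, which justifies the corollary precisely by observing that the proofs of Lemma~\ref{lem:inf:hier1} and Lemma~\ref{lem:inf:hier2} do not depend on the exact number of components once at least four are available; your padding of idle, never-queried components and your check that the Kolmogorov snapshot bound is uniform in the fixed constant $k$ simply make that observation explicit. (One negligible slip: the idle components would loop on $\lambda$-moves from the start, as there is no ``leading'' endmarker, only the right end-of-input symbol.)
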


\end{document}